\documentclass{article}
\usepackage{dsfont}
\usepackage{graphicx}
\usepackage{psfrag}
\usepackage{stmaryrd}
\usepackage{subfigure}
\usepackage{amsmath}
\usepackage{alltt}
\usepackage{amsfonts}
\usepackage{amssymb}
\usepackage{mathrsfs}
\usepackage{txfonts}
\usepackage{booktabs}
\usepackage{times}
\usepackage{algorithm2e}
\usepackage{algorithmic}
\usepackage{tabularx}
\usepackage[standard]{ntheorem}
\usepackage{mathptmx}
\usepackage{helvet}
\usepackage{courier}
\usepackage{makeidx}
\usepackage{multicol}
\usepackage{footmisc}

\begin{document}
\title{A Topological Study of Chaotic Iterations\\Application to Hash Functions}

\author{Christophe Guyeux \and Jacques M. Bahi}


\maketitle
\abstract{Chaotic iterations, a tool formerly used in distributed computing, has recently revealed various interesting properties of disorder leading to its use in the computer science security field.
In this paper, a comprehensive study of its topological behavior is proposed.
It is stated that, in addition to being chaotic as defined in the Devaney's formulation, this tool possesses the property of topological mixing.
Additionally, its level of sensibility, expansivity, and topological entropy are evaluated.
All of these properties lead to a complete unpredictable behavior for the chaotic iterations.
As it only manipulates binary digits or integers, we show that it is possible to use it to produce truly chaotic computer programs.
As an application example, a truly chaotic hash function is proposed in two versions.
In the second version, an artificial neural network is used, which can be stated as chaotic according to Devaney.}


\section{INTRODUCTION}
Chaotic iterations (CIs) were formerly a way to formalize distributed algorithms through mathematical tools \cite{Chazan69}.
By using these CIs, it was thus possible to study the convergence of synchronous or asynchronous programs over parallel, distributed, P2P, grid, or GPU platforms, in a view to solve linear and non-linear systems.
We have proven at the IEEE World Congress on Computational Intelligence (WCCI'10) that CIs can behave chaotically, as it is defined by Devaney \cite{guyeux10}.
These proofs have been improved and more detailed in \cite{guyeux09}.
In this paper, which is an extension of \cite{guyeux10,guyeux09}, we notably enlarge the theoretical study of CIs, among other things by computing its topological entropy, to obtain a comprehensive evaluation of its topological behavior.
This study leads us to the conclusion that the chaos of CIs is very intense and constitutes a useful tool to be used in the computer science security field.

Chaos in information security fields as digital watermarking \cite{CongJQZ06,Zhu06}, hash functions \cite{Wang2003,Xiao20094346}, or pseudo-random number generators, is often disputed.
This is due to the fact that this use is almost always based on the conception of algorithms that only include ``somewhere'' some well-known chaotic real functions like logistic, tent, or Arnold's cat maps, to obtain a program supposed to express these chaotic properties \cite{Fei2005,Wu2007bis,Zhou1997429}.
However, using such functions with other ``obvious'' parameters does not guarantee that the whole algorithm still remains chaotic.
Such an assumption should at least be discussed.
Moreover, even if the algorithm obtained by the inclusion of chaotic maps is itself proven to be chaotic, its implementation on a machine can lead to the fact that this chaotic nature is lost.
This is due to the finite cardinality of the machine numbers set.

In this paper, as in \cite{guyeux10,guyeux09}, we do not simply integrate chaotic maps into algorithms hoping that the result remains chaotic.
We conceive algorithms for computer security that we have mathematically proven to be chaotic, as it is defined in the Devaney's theory.
We raise the question of their implementation, proving in doing so that it is possible to design both a chaotic algorithm and its associated chaotic computer program.
The chaos theory we consider is taken from the mathematical topology.
It encompasses the well-known Devaney's definition of chaos and the notions of expansivity, topological entropy, and topological mixing.
These notions of unpredictability are the most established ones into the mathematical theory of chaos.
Our fundamental study is motivated by the desire to produce chaotic programs in the area of information security.

The paper begins by introducing the theoretical foundation of this approach.
On the one hand we recall the definition of Devaney's topological chaos and on the other hand the definition of discrete chaotic iterations.
Although these definitions are distinct from each other, we establish a link between them by giving conditions under which chaotic discrete iterations generate a Devaney's topological chaos.
This study is deepened by giving some qualitative and quantitative evaluations of the disorder generated by chaotic iterations.
These evaluations are not present in \cite{guyeux09,guyeux10}.
We will focus in this paper on the notions of expansivity, topological mixing, and topological entropy.
The proofs that the considered space is separated and compact have never been published before.
Furthermore, the evaluation of the topological entropy is completely new.
Then, because chaotic iterations are very suited for computer programming, this link allows us to generate programs in the computer science field that behave chaotically.
This link was formerly presented in \cite{guyeux10,guyeux09} with some errors that are corrected here.

After having studied the theoretical aspects of our approach we focus on the practical ones.
The important question is how to preserve the topological chaos properties in a set of a finite number of states.
This question is answered in Section \ref{section:CHAOS IN A FINITE STATE MACHINE}, by manipulating only integers and considering the use of new data at each iteration.
The general algorithm based on our approach, formerly presented in \cite{guyeux10}, is now explained in detail in Section \ref{section:HASH FUNCTIONS BASED ON TOPOLOGICAL CHAOS}.
It is rewritten in the next section as an artificial neural network that can compute hash values while behaving chaotically.
This new application is completely new and has never been published before.

\medskip

The remainder of this paper is organized in the following way.
In Section \ref{section:BASIC RECALLS}, the definitions of Devaney's chaos and discrete chaotic iterations are recalled.
A link between these two notions is established and sufficient conditions to obtain Devaney's topological chaos from discrete chaotic iterations are given in Section \ref{section:CHAOTIC ITERATIONS AS DEVANEY'S CHAOS}.
The chaotic behavior of CIs is deepened in Section \ref{section:PROPERTIES OF THE CHAOTIC MACHINE}, by studying some qualitative and quantitative properties of disorder.
In Section \ref{section:CHAOS IN A FINITE STATE MACHINE}, the question on how to preserve these chaotic properties into computers is answered.
Then in Section \ref{section:HASH FUNCTIONS BASED ON TOPOLOGICAL CHAOS} the general hash function scheme is given and illustrated, whereas in Section \ref{sec:A CHAOTIC NEURAL NETWORK AS HASH FUNCTION} it is applied to produce an artificial neural network able to hash some values in a chaotic manner.
The paper ends by a conclusion section in which our contribution is summarized and planned future work is discussed.

\section{BASIC RECALLS}
\label{section:BASIC RECALLS}
This section is devoted to basic definitions and terminologies in the field of topological chaos and in the one of chaotic iterations.
\subsection{Devaney's chaotic dynamical systems}

In the sequel $S^{n}$ denotes the $n^{th}$ term of a sequence $S$ and $V_{i}$ denotes the $i^{th}$ component of a vector $V$. $f^{k}=f\circ ...\circ f$ denotes the $k^{th}$ composition of a function $f$. Finally, the following notation is used: $\llbracket1;N\rrbracket=\{1,2,\hdots,N\}$.

Consider a topological space $(\mathcal{X},\tau)$ and a continuous function $f : \mathcal{X} \rightarrow \mathcal{X}$.

\begin{definition}
$f$ is said to be \emph{topologically transitive} if, for any pair of open sets $U,V \subset \mathcal{X}$, there exists $k>0$ such that $f^k(U) \cap V \neq \varnothing$.
\end{definition}

\begin{definition}
An element $x$ is a \emph{periodic point} for $f$ of period $n\in \mathds{N}^*$ if $f^{n}(x)=x$.
\end{definition}

\begin{definition}
$f$ is said to be \emph{regular} on $(\mathcal{X}, \tau)$ if the set of periodic points for $f$ is dense in $\mathcal{X}$: for any point $x$ in $\mathcal{X}$, any neighborhood of $x$ contains at least one periodic point (without necessarily the same period).
\end{definition}

\begin{definition}
$f$ is said to be \emph{chaotic} on $(\mathcal{X},\tau)$ if $f$ is regular and topologically transitive.
\end{definition}

The chaos property is strongly linked to the notion of ``sensitivity'', defined on a metric space $(\mathcal{X},d)$ by:

\begin{definition}
\label{sensitivity} $f$ has \emph{sensitive dependence on initial conditions}
if there exists $\delta >0$ such that, for any $x\in \mathcal{X}$ and any neighborhood $V$ of $x$, there exist $y\in V$ and $n > 0$ such that $d\left(f^{n}(x), f^{n}(y)\right) >\delta $.

$\delta$ is called the \emph{constant of sensitivity} of $f$.
\end{definition}

Indeed, Banks \emph{et al.} have proven in~\cite{Banks92} that when $f$ is chaotic and $(\mathcal{X}, d)$ is a metric space, then $f$ has the property of sensitive dependence on initial conditions (this property was formerly an element of the definition of chaos). To sum up, quoting Devaney in~\cite{Devaney}, a chaotic dynamical system ``is unpredictable because of the sensitive dependence on initial conditions. It cannot be broken down or simplified into two subsystems which do not interact because of topological transitivity. And in the midst of this random behavior, we nevertheless have an element of regularity''. Fundamentally different behaviors are consequently possible and occur in an unpredictable way.

\subsection{Chaotic iterations}
\label{sec:chaotic iterations}

Let us consider  a \emph{system} with a finite  number $\mathsf{N} \in
\mathds{N}^*$ of elements  (or \emph{cells}), so that each  cell has a
Boolean  \emph{state}. Having $\mathsf{N}$ Boolean values for these
 cells  leads to the definition of a particular \emph{state  of the
system}. A sequence which  elements belong to $\llbracket 1;\mathsf{N}
\rrbracket $ is called a \emph{strategy}. The set of all strategies is
denoted by $\mathbb{S}.$

\begin{definition}
\label{Def:chaotic iterations}
The      set       $\mathds{B}$      denoting      $\{0,1\}$,      let
$f:\mathds{B}^{\mathsf{N}}\longrightarrow  \mathds{B}^{\mathsf{N}}$ be
a  function  and  $S\in  \mathbb{S}$  be  a  strategy.  The  so-called
\emph{chaotic      iterations}     are     defined      by     $x^0\in
\mathds{B}^{\mathsf{N}}$ and
$$
\forall    n\in     \mathds{N}^{\ast     },    \forall     i\in
\llbracket1;\mathsf{N}\rrbracket ,x_i^n=\left\{
\begin{array}{ll}
  x_i^{n-1} &  \text{ if  }S^n\neq i \\
  \left(f(x^{n-1})\right)_{S^n} & \text{ if }S^n=i.
\end{array}\right.
$$
\end{definition}

In other words, at the $n^{th}$ iteration, only the $S^{n}-$th cell is
\textquotedblleft  iterated\textquotedblright .  Note  that in  a more
general  formulation,  $S^n$  can   be  a  subset  of  components  and
$\left(f(x^{n-1})\right)_{S^{n}}$      can     be      replaced     by
$\left(f(x^{k})\right)_{S^{n}}$, where  $k<n$, describing for example,
delays  transmission~\cite{Robert1986,GuyeuxThese10}.  Finally,  let us  remark that
the term  ``chaotic'', in  the name of  these iterations,  has \emph{a
priori} no link with the mathematical theory of chaos, recalled above.

\section{CHAOTIC ITERATIONS AS DEVANEY'S CHAOS}
\label{section:CHAOTIC ITERATIONS AS DEVANEY'S CHAOS}

In this section is proven that chaotic iterations are a particular case of topological chaos, as it is defined in the Devaney's formulation'.

\subsection{The new topological space}

In this section we define a suitable metric space where chaotic iterations are continuous.
\subsubsection{Defining the iteration function and the phase space}
\label{Defining}
Let $\delta $ be the \emph{discrete Boolean metric}, $\delta (x,y)=0\Leftrightarrow x=y.$ Given a function $f$, define the function:
\begin{equation*}
\begin{array}{lrll}
F_{f}: & \llbracket1;\mathsf{N}\rrbracket\times \mathds{B}^{\mathsf{N}} &
\longrightarrow & \mathds{B}^{\mathsf{N}} \\
& (k,E) & \longmapsto & \left( E_{j}.\delta (k,j)+f(E)_{k}.\overline{\delta
(k,j)}\right) _{j\in \llbracket1;\mathsf{N}\rrbracket},%
\end{array}%
\end{equation*}%
\noindent where + and . are the Boolean addition and product operations.
Consider the phase space:
\begin{equation*}
\mathcal{X} = \llbracket 1 ; \mathsf{N} \rrbracket^\mathds{N} \times
\mathds{B}^\mathsf{N},
\end{equation*}
\noindent and the map defined on $\mathcal{X}$:
\begin{equation}
G_f\left(S,E\right) = \left(\sigma(S), F_f(i(S),E)\right), \label{Gf}
\end{equation}
\noindent where $\sigma$ is the \emph{shift} function defined by $\sigma (S^{n})_{n\in \mathds{N}}\in \mathbb{S}\longrightarrow (S^{n+1})_{n\in \mathds{N}}\in \mathbb{S}$ and $i$ is the \emph{initial function}  $i:(S^{n})_{n\in \mathds{N}} \in \mathbb{S}\longrightarrow S^{0}\in \llbracket 1;\mathsf{N}\rrbracket$. Then the chaotic iterations defined in (\ref{sec:chaotic iterations}) can be described by the following iterations:
\begin{equation*}
\left\{
\begin{array}{l}
X^0 \in \mathcal{X} \\
X^{k+1}=G_{f}(X^k).%
\end{array}%
\right.
\end{equation*}%

With this formulation, a shift function appears as a component of chaotic iterations. The shift function is a famous example of a chaotic map~\cite{Devaney} but its presence is not sufficient enough to claim $G_f$ as chaotic. In the rest of this section we rigorously prove that under some hypotheses, chaotic iterations generate topological chaos. Furthermore, due to the suitability of chaotic iterations for computer programming \cite{bgw09:ip,wbg10:ip,guyeux10ter} we also have proven that this is true in the computer science field.

\subsubsection{Cardinality of $\mathcal{X}$}
By comparing $\mathbb{S}$ and $\mathds{R}$, we have the following result.
\begin{theorem}
The phase space $\mathcal{X}$ has, at least, the cardinality of the continuum.
\end{theorem}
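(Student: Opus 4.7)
The plan is to exhibit an injection from $\mathds{R}$ (or equivalently from $[0,1]$) into $\mathcal{X}$. Since $\mathcal{X}=\mathbb{S}\times \mathds{B}^{\mathsf{N}}$ and $\mathds{B}^{\mathsf{N}}$ is a nonempty finite set, it suffices to produce an injection from $[0,1]$ into the strategy space $\mathbb{S}=\llbracket 1;\mathsf{N}\rrbracket^{\mathds{N}}$. Throughout we assume $\mathsf{N}\geqslant 2$ (the case $\mathsf{N}=1$ makes $\mathbb{S}$ a singleton and the chaotic iterations trivial, so this hypothesis is harmless in the context of the paper).

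First, I would restrict to the subspace $\{1,2\}^{\mathds{N}}\subseteq \llbracket 1;\mathsf{N}\rrbracket^{\mathds{N}}$ and show that $[0,1]$ injects into it. Given $r\in[0,1]$, choose its binary expansion $r=\sum_{n=1}^{\infty}b_n(r)2^{-n}$ where, to avoid the ambiguity $0.b_1\ldots b_k0111\ldots = 0.b_1\ldots b_k1000\ldots$, one systematically picks the expansion that is \emph{not} eventually equal to $1$ (every $r\in[0,1]$ has exactly one such expansion). Then define
\begin{equation*}
\varphi : [0,1] \longrightarrow \mathbb{S}, \qquad \varphi(r) = \bigl(b_n(r)+1\bigr)_{n\in\mathds{N}^{\ast}}.
\end{equation*}
The map $\varphi$ is injective because two distinct reals must differ in at least one digit of their canonical expansion, so their images differ in the corresponding coordinate.

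From this injection I deduce $|\mathbb{S}|\geqslant |[0,1]| = \mathfrak{c}$, and hence $|\mathcal{X}|=|\mathbb{S}|\cdot 2^{\mathsf{N}}\geqslant \mathfrak{c}$, which is the desired statement. (In fact one has equality $|\mathcal{X}|=\mathfrak{c}$, since $\mathsf{N}^{\aleph_0}=2^{\aleph_0}=\mathfrak{c}$ for any $2\leqslant \mathsf{N}<\aleph_0$, but the theorem only claims the lower bound.)

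The only delicate point is the non-uniqueness of binary expansions for dyadic rationals; this is handled once and for all by fixing the non-terminating-in-$1$ convention. Apart from that, the argument reduces to elementary cardinal arithmetic and presents no real obstacle.
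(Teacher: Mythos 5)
Your proof is correct and follows essentially the same route as the paper, which also compares $\mathbb{S}$ with $[0,1]$ via binary digit sequences --- the paper exhibits a surjection from $\mathbb{S}$ onto $[0,1]$ by reading each strategy term as a binary digit, whereas you exhibit the reverse injection from $[0,1]$ into $\mathbb{S}$, which is if anything the cleaner direction since it gives the cardinality inequality directly rather than via a surjection. The only nitpick is that $r=1$ admits no binary expansion that is not eventually $1$ (its unique expansion is all $1$'s), so your convention should be stated on $[0,1)$; this costs nothing since $[0,1)$ still has the cardinality of the continuum.
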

\begin{proof}
Let $\varphi$ be the map which transforms a strategy into the binary representation of an element in $[0,1$[, as follows. If the $n^{th}$ term of the strategy is 0, then the $n^{th}$ associated digit is 0. If this $n^{th}$ term is not equal to 0, then the associated digit is 1.
With this construction, $\varphi : \llbracket 1 ; \mathsf{N} \rrbracket^\mathds{N} \longrightarrow [0,1]$ is onto. But $]0,1[$ is isomorphic to $\mathds{R}$ ($x \in ]0,1[\mapsto tan(\pi(x-\frac{1}{2}))$ is an isomorphism), so the cardinality of $\llbracket 1 ; \mathsf{N} \rrbracket^\mathds{N}$ is greater or equal to the cardinality of $\mathds{R}$. As a consequence, the cardinality of the Cartesian product $\mathcal{X} = \llbracket 1 ; \mathsf{N} \rrbracket^\mathds{N} \times \mathds{B}^\mathsf{N}$ is greater or equal to the cardinality of $\mathds{R}$.
\end{proof}
\begin{remark}
This result is independent from the number of components of the system.
\end{remark}

\subsubsection{A new distance}

Let us define a new distance between two points $X = (S,E), Y = (\check{S},\check{E})\in
\mathcal{X}$ by%
\begin{equation*}
d(X,Y)=d_{e}(E,\check{E})+d_{s}(S,\check{S}),
\end{equation*}
\noindent where
\begin{equation*}
\left\{
\begin{array}{lll}
\displaystyle{d_{e}(E,\check{E})} & = & \displaystyle{\sum_{k=1}^{\mathsf{N}%
}\delta (E_{k},\check{E}_{k})}, \\
\displaystyle{d_{s}(S,\check{S})} & = & \displaystyle{\dfrac{9}{\mathsf{N}}%
\sum_{k=1}^{\infty }\dfrac{|S^k-\check{S}^k|}{10^{k}}}.%
\end{array}%
\right.
\end{equation*}

This new distance has been introduced in \cite{guyeux10} to satisfy the following requirements.
\begin{itemize}
\item When the number of different cells between two systems is increasing, then their distance should increase too.
\item In addition, if two systems present the same cells and their respective strategies start with the same terms, then the distance between these two points must be small because the evolution of the two systems will be the same for a while. Indeed, the two dynamical systems start with the same initial condition, use the same update function, and as strategies are the same for a while, then components that are updated are the same too.
\end{itemize}
The distance presented above follows these recommendations. Indeed, if the floor value $\lfloor d(X,Y)\rfloor $ is equal to $n$, then the systems $E, \check{E}$ differ in $n$ cells. In addition, $d(X,Y) - \lfloor d(X,Y) \rfloor $ is a measure of the differences between strategies $S$ and $\check{S}$. More precisely, this floating part is less than $10^{-k}$ if and only if the first $k$ terms of the two strategies are equal. Moreover, if the $k^{th}$ digit is nonzero, then the $k^{th}$ terms of the two strategies are different.

\subsubsection{Continuity of the iteration function}

To prove that chaotic iterations are an example of topological chaos in the
sense of Devaney ~\cite{Devaney}, $G_{f}$ must be continuous in the metric
space $(\mathcal{X},d)$.

\begin{theorem}
$G_f$ is a continuous function.
\end{theorem}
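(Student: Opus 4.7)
The plan is to verify continuity by an $\varepsilon$-$\delta$ argument directly on the definition of $d = d_e + d_s$, exploiting the fact that the Boolean component is effectively ``discrete'' inside this metric. The decisive structural observation is that $d_s$ is uniformly bounded above by $1$ (since $|S^k - \check{S}^k| \leq \mathsf{N}-1$ and the geometric factor $9/\mathsf{N}$ exactly compensates), while $d_e$ is integer-valued. Consequently, once $d(X,Y) < 1$ one already has $E = \check{E}$, and moreover making $d(X,Y)$ smaller than $10^{-k}$ forces the first $k$ terms of $S$ and $\check{S}$ to coincide, as announced in the subsection on the distance.

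With this in mind, I would fix $X = (S,E) \in \mathcal{X}$ and $\varepsilon > 0$ and argue as follows. First, exhibit an integer $k \geq 1$ large enough that a prefix agreement of length $k+1$ between strategies translates (after one shift) into $d_s(\sigma(S), \sigma(\check{S})) < \varepsilon$; this uses the explicit geometric weights $9/(\mathsf{N}\cdot 10^j)$ and is a one-line estimate once a prefix of length $k$ is agreed upon by $\sigma(S)$ and $\sigma(\check{S})$. Then choose $\delta = 10^{-(k+1)}$, which is in particular less than $1$. For any $Y = (\check{S},\check{E})$ with $d(X,Y) < \delta$ we simultaneously obtain $E = \check{E}$ (from $d_e$ being a non-negative integer less than $1$) and $S^j = \check{S}^j$ for $0 \leq j \leq k$ (from the characterization of $d_s$).

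From here the computation of $d(G_f(X), G_f(Y))$ splits cleanly. For the Boolean component: $F_f(i(S),E)$ depends only on $S^0$ and $E$, both of which coincide with $\check{S}^0$ and $\check{E}$ under the above conditions, so $F_f(i(S),E) = F_f(i(\check{S}),\check{E})$ and the $d_e$ contribution vanishes. For the strategy component: agreement of $S$ and $\check{S}$ on the prefix of length $k+1$ implies agreement of $\sigma(S)$ and $\sigma(\check{S})$ on the prefix of length $k$, hence $d_s(\sigma(S),\sigma(\check{S})) < \varepsilon$ by the choice of $k$. Summing the two contributions gives $d(G_f(X), G_f(Y)) < \varepsilon$, which is continuity at $X$.

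The main obstacle, such as it is, is bookkeeping: one must be precise about what ``the first $k$ digits of $d_s$ are small'' means and how it interacts with $\sigma$, and one must make sure to pick $\delta < 1$ so that the Boolean parts collapse to equality. Once those two ingredients are in place, the argument is essentially automatic; there is no need for any fixed-point or compactness machinery, and the hypothesis on $f$ plays no role because $F_f(\cdot,\cdot)$ depends only on two coordinates that have already been locked down by the choice of $\delta$.
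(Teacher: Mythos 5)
Your argument is correct and takes essentially the same route as the paper, which packages the identical ingredients (integer-valued $d_e$ forces $E=\check{E}$ once $d<1$, small $d_s$ forces prefix agreement, $F_f(i(S),E)$ depends only on $S^0$ and $E$, and $G_f$ shifts strategies so a length-$(k+1)$ prefix agreement becomes length $k$) as a sequential-continuity argument instead of an explicit $\varepsilon$--$\delta$ one. The only caveat, which you inherit from the paper's own description of $d_s$, is that for $\mathsf{N}>9$ the bound $d_s<10^{-(k+1)}$ does not quite force the first $k+1$ terms to coincide (a single differing pair of terms may contribute as little as $\tfrac{9}{\mathsf{N}}10^{-(k+1)}$), so you should take $\delta=\tfrac{9}{\mathsf{N}}10^{-(k+1)}$ instead; nothing else in your proof changes.
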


\begin{proof}
We use the sequential continuity.
Let $(S^n,E^n)_{n\in \mathds{N}}$ be a sequence of the phase space $%
\mathcal{X}$, which converges to $(S,E)$. We will prove that $\left(
G_{f}(S^n,E^n)\right) _{n\in \mathds{N}}$ converges to $\left(
G_{f}(S,E)\right) $. Let us recall that for all $n$, $S^n$ is a strategy,
thus, we consider a sequence of strategies (\emph{i.e.}, a sequence of
sequences).\newline
As $d((S^n,E^n);(S,E))$ converges to 0, each distance $d_{e}(E^n,E)$ and $d_{s}(S^n,S)$ converges
to 0. But $d_{e}(E^n,E)$ is an integer, so $\exists n_{0}\in \mathds{N},$ $%
d_{e}(E^n,E)=0$ for any $n\geqslant n_{0}$.\newline
In other words, there exists a threshold $n_{0}\in \mathds{N}$ after which no
cell will change its state:
$\exists n_{0}\in \mathds{N},n\geqslant n_{0}\Rightarrow E^n = E.$

In addition, $d_{s}(S^n,S)\longrightarrow 0,$ so $\exists n_{1}\in %
\mathds{N},d_{s}(S^n,S)<10^{-1}$ for all indexes greater than or equal to $%
n_{1}$. This means that for $n\geqslant n_{1}$, all the $S^n$ have the same
first term, which is $S^0$: $\forall n\geqslant n_{1},S_0^n=S_0.$

Thus, after the $max(n_{0},n_{1})^{th}$ term, states of $E^n$ and $E$ are
identical and strategies $S^n$ and $S$ start with the same first term.\newline
Consequently, states of $G_{f}(S^n,E^n)$ and $G_{f}(S,E)$ are equal,
so, after the $max(n_0, n_1)^{th}$ term, the distance $d$ between these two points is strictly less than 1.\newline
\noindent We now prove that the distance between $\left(
G_{f}(S^n,E^n)\right) $ and $\left( G_{f}(S,E)\right) $ is convergent to
0. Let $\varepsilon >0$. \medskip
\begin{itemize}
\item If $\varepsilon \geqslant 1$, we see that distance
between $\left( G_{f}(S^n,E^n)\right) $ and $\left( G_{f}(S,E)\right) $ is
strictly less than 1 after the $max(n_{0},n_{1})^{th}$ term (same state).
\medskip
\item If $\varepsilon <1$, then $\exists k\in \mathds{N},10^{-k}\geqslant
\varepsilon > 10^{-(k+1)}$. But $d_{s}(S^n,S)$ converges to 0, so
\begin{equation*}
\exists n_{2}\in \mathds{N},\forall n\geqslant
n_{2},d_{s}(S^n,S)<10^{-(k+2)},
\end{equation*}%
thus after $n_{2}$, the $k+2$ first terms of $S^n$ and $S$ are equal.
\end{itemize}
\noindent As a consequence, the $k+1$ first entries of the strategies of $%
G_{f}(S^n,E^n)$ and $G_{f}(S,E)$ are the same ($G_{f}$ is a shift of strategies) and due to the definition of $d_{s}$, the floating part of
the distance between $(S^n,E^n)$ and $(S,E)$ is strictly less than $%
10^{-(k+1)}\leqslant \varepsilon $.\bigskip \newline
In conclusion,
$$
\forall \varepsilon >0,\exists N_{0}=max(n_{0},n_{1},n_{2})\in \mathds{N}%
,\forall n\geqslant N_{0},
 d\left( G_{f}(S^n,E^n);G_{f}(S,E)\right)
\leqslant \varepsilon .
$$
$G_{f}$ is consequently continuous.
\end{proof}

In this section, we proved that chaotic iterations can be modeled as a dynamical system in a topological space. In the next section, we show that some chaotic iterations behave chaotically, as defined by Devaney's theory.

\subsection{Discrete chaotic iterations as topological chaos}

To prove that we are in the framework of Devaney's topological chaos, we have to find a Boolean function $f$ such that $G_f$ satisfies the regularity, transitivity, and sensitivity conditions.
We will prove that the vectorial logical negation

\begin{equation}
f_{0}(x_{1},%
\hdots,x_{\mathsf{N}})=(\overline{x_{1}},\hdots,\overline{x_{\mathsf{N}}})
\label{f0}
\end{equation}
\noindent is a suitable function.

\subsubsection{Regularity}
\label{regularite}

Firstly, let us prove that,

\begin{theorem}
Periodic points of $G_{f_0}$ are dense in $\mathcal{X}$.
\end{theorem}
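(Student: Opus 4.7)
The plan is to show that for every point $(S,E) \in \mathcal{X}$ and every $\varepsilon > 0$, we can explicitly construct a periodic point $(S',E')$ of $G_{f_0}$ within distance $\varepsilon$ of $(S,E)$. The strategy is to fix $E' = E$, force $S'$ to agree with $S$ on sufficiently many initial terms so that the strategy part of the distance is small, and then extend $S'$ periodically in a way that makes the state return exactly to $E$ after one period.

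First I would reduce to the case $\varepsilon < 1$ (otherwise setting $E' = E$ already gives $d < 1$), and choose an integer $k$ with $\frac{N-1}{N}\cdot 10^{-k} < \varepsilon$. Define $S'^j = S^j$ for $j = 1, 2, \ldots, k$. Let $E_k$ denote the state reached after $k$ iterations of $G_{f_0}$ starting from $(S,E)$. Because $f_0$ is the componentwise negation, each application of $G_{f_0}$ flips exactly the coordinate indexed by the current first term of the strategy. Consequently, $E_k$ differs from $E$ precisely on the set $A \subset \llbracket 1;N\rrbracket$ of coordinates that have been selected an odd number of times among $S^1, \ldots, S^k$.

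Next I would enumerate $A = \{a_1, \ldots, a_p\}$ (with $p = |A|$, possibly zero) and append these indices: set $S'^{k+j} = a_j$ for $j = 1, \ldots, p$. After $k+p$ iterations, each coordinate in $A$ has been flipped an additional (odd, in fact once more) number of times so that the total count becomes even, and each coordinate outside $A$ is still flipped an even number of times. Hence the state returns exactly to $E$. Finally, I extend $S'$ periodically with period $T = k + p$ by $S'^{n+T} = S'^n$. Then $\sigma^T(S') = S'$ and the state component is back to $E$, so $G_{f_0}^T(S',E) = (S',E)$, i.e., $(S',E)$ is a periodic point.

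It remains to bound the distance: $d((S',E),(S,E)) = d_e(E,E) + d_s(S',S) = d_s(S',S)$, and since $S'$ and $S$ coincide on their first $k$ terms while each difference is bounded by $N-1$, a geometric-series estimate yields $d_s(S',S) \leq \tfrac{N-1}{N}\cdot 10^{-k} < \varepsilon$. The only delicate point is the parity bookkeeping for the flips; everything else is a routine verification. No real obstacle arises because $f_0$ is so simple that one flip per iteration is enough to control the state evolution, making the construction of a terminating ``correction tail'' almost immediate.
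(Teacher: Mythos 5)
Your construction is correct and is essentially the same as the paper's: fix the state component, copy the first $k$ strategy terms, append once each coordinate at which the $k$-step iterate differs from the original state (your parity set $A$ is exactly the paper's set $\mathcal{J}$), and repeat with period $k+p$. The only cosmetic difference is that you justify the return to $E$ via flip-parity bookkeeping while the paper phrases it directly in terms of the set of differing cells; the distance estimate and the periodicity argument are identical.
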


\begin{proof}
Let $(\check{S}, \check{E})\in \mathcal{X}$ and $\varepsilon >0$. We are
looking for a periodic point $(\widetilde{S},\widetilde{E})$ satisfying $d((%
\check{S}, \check{E});(\widetilde{S},\widetilde{E}))<\varepsilon$.
As $\varepsilon$ can be strictly lesser than 1, we must choose $%
\widetilde{E} = \check{E}$. Let us define $k_0(\varepsilon) =\lfloor
log_{10}(\varepsilon )\rfloor +1$ and consider the set
\[
\mathcal{S}_{\check{S}, k_0(\varepsilon)} = \left\{ S \in \mathbb{S} / S^k =
\check{S}^k, \forall k \leqslant k_0(\varepsilon) \right\}.
\]
Then, $\forall S \in \mathcal{S}_{\check{S}, k_0(\varepsilon)}, d((S, \check{%
E});(\check{S}, \check{E})) < \varepsilon$. It remains to choose $\widetilde{%
S} \in \mathcal{S}_{\check{S}, k_0(\varepsilon)}$ such that $(\widetilde{S},%
\widetilde{E}) = (\widetilde{S},\check{E})$ is a periodic point for $%
G_{f_0}$.
Let

\begin{center}
$\mathcal{J} = \left\{ i \in \{1, ..., \mathsf{N}\} / E_i \neq \check{%
E}_i, \text{ where } (S, E) = G_{f_0}^{k_0} (\check{S}, \check{E}) \right\}$%
,
\end{center}

\noindent $i_0 = card(\mathcal{J})$, and $j_1 <j_2 < ... < j_{i_0}$ the elements of $%
\mathcal{J}$. Then, $\widetilde{S} \in \mathcal{S}_{\check{S},
k_0(\varepsilon)}$ defined by
\begin{itemize}
\item $\widetilde{S}^k = \check{S}^k$, if $k \leqslant k_0(\varepsilon)$,
\item $\widetilde{S}^k = j_{k-k_0(\varepsilon)}$, if $k \in
\{k_0(\varepsilon)+1, k_0(\varepsilon)+2, ..., k_0(\varepsilon)+i_0\}$,
\item and $\widetilde{S}^{k}=\widetilde{S}^{j}$, where $j\leqslant
k_{0}(\varepsilon )+i_{0}$ is satisfying $j\equiv k~(\text{mod }%
k_{0}(\varepsilon )+i_{0})$, if $k>k_{0}(\varepsilon )+i_{0}$,
\end{itemize}
\noindent is such that $%
(\widetilde{S},\widetilde{E})$ is a periodic point (of period $%
k_{0}(\varepsilon )+i_{0}$), which is $\varepsilon -$close to $(\check{S},%
\check{E})$.\newline As a conclusion, $(\mathcal{X},G_{f_{0}})$ is
regular.
\end{proof}

\subsubsection{Transitivity}

Regarding the transitivity property of $G_{f_0}$, we can show that,

\begin{theorem}
$(\mathcal{X},G_{f_0})$ is topologically transitive.
\end{theorem}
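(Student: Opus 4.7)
The plan is to show directly that for any non-empty open sets $U, V \subset \mathcal{X}$ there exists $k > 0$ with $G_{f_0}^k(U) \cap V \neq \varnothing$. Pick representatives $(S,E) \in U$ and $(\check S, \check E) \in V$, and choose $\varepsilon > 0$ small enough that the open balls of radius $\varepsilon$ around these two points are contained in $U$ and $V$ respectively. Set $k_0 = \lfloor \log_{10}(\varepsilon) \rfloor + 1$, so that, exactly as in the regularity proof, any $X \in \mathcal{X}$ whose Boolean part equals $E$ and whose strategy agrees with $S$ on its first $k_0$ terms lies within $\varepsilon$ of $(S,E)$, and similarly for $(\check S, \check E)$.

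Next I would construct a single strategy $\widetilde S$ tailored in three stages. The first $k_0$ terms of $\widetilde S$ are copied from $S$; this keeps $(\widetilde S, E)$ inside $U$. Then I compute $(S', E') = G_{f_0}^{k_0}(\widetilde S, E)$, let $\mathcal{J} = \{ i \in \llbracket 1; \mathsf{N} \rrbracket : E'_i \neq \check E_i \}$, and order its elements as $j_1 < j_2 < \dots < j_{i_0}$. Because $f_0$ is vectorial negation, applying $G_{f_0}$ with strategy value $j$ flips exactly the $j$-th bit. So I define the next $i_0 \leq \mathsf{N}$ terms of $\widetilde S$ to be $j_1, j_2, \dots, j_{i_0}$; after these steps the Boolean component becomes exactly $\check E$. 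Finally, for the tail starting at position $k_0 + i_0 + 1$, I copy the first $k'_0$ terms of $\check S$, where $k'_0 = \lfloor \log_{10}(\varepsilon) \rfloor + 1$ again (the remaining terms may be chosen arbitrarily in $\llbracket 1; \mathsf{N} \rrbracket$).

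Setting $k = k_0 + i_0$, the point $G_{f_0}^{k}(\widetilde S, E)$ has Boolean component equal to $\check E$ and strategy $\sigma^{k}(\widetilde S)$ whose first $k'_0$ terms coincide with $\check S$. By the construction of $k'_0$ and the formula for $d_s$, this places $G_{f_0}^{k}(\widetilde S, E)$ within $\varepsilon$ of $(\check S, \check E)$, hence inside $V$. Since $(\widetilde S, E) \in U$, we get $G_{f_0}^{k}(U) \cap V \neq \varnothing$, which establishes topological transitivity.

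The main obstacle is the bookkeeping in stage two: one must verify that inserting the finite block $j_1, \dots, j_{i_0}$ into $\widetilde S$ really transforms $E'$ into $\check E$ bit-by-bit. This works because the $F_{f_0}$ step only rewrites the cell whose index appears in the strategy, and since $f_0$ is the negation, each appearance of $j_\ell$ toggles cell $j_\ell$ exactly once; the ordering $j_1 < \dots < j_{i_0}$ together with the disjointness of these indices guarantees that no bit is toggled twice, so the net effect is precisely to align $E'$ with $\check E$ on $\mathcal{J}$ while leaving the remaining cells unchanged.
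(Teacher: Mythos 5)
Your proposal is correct and follows essentially the same route as the paper's own proof: fix the Boolean part to that of a point in $U$, copy enough initial strategy terms to stay in $U$, insert the finite block of indices of the cells where the iterated state differs from the target state (each index toggling its cell exactly once under $f_0$), and then graft on a prefix of the target strategy to land in $V$. The only cosmetic difference is that the paper copies the entire tail of $S_B$ rather than just its first $k'_0$ terms, which changes nothing.
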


\begin{proof}
Let us define $\mathcal{E}:\mathcal{X}\rightarrow \mathbb{B}^{\mathsf{N}},$
such that $\mathcal{E(}S,E)=E.$ Let $\mathcal{B}_{A}=\mathcal{B}(X_{A},r_{A})
$ and $\mathcal{B}_{B}=\mathcal{B}(X_{B},r_{B})$ be two open balls of $%
\mathcal{X}$, with $X_{A}=(S_{A},E_{A})$ and $X_{B}=(S_{B},E_{B})$. We are
looking for $\widetilde{X}=(\widetilde{S},\widetilde{E})$ in $\mathcal{B}_{A}
$ such that $\exists n_{0}\in \mathbb{N},G_{f_{0}}^{n_{0}}(\widetilde{X})\in
\mathcal{B}_{B}$.\newline
$\widetilde{X}$ must be in $\mathcal{B}_{A}$ and $r_{A}$ can be strictly
lesser than 1, so $\widetilde{E}=E_{A}$. Let $k_{0}=\lfloor \log
_{10}(r_{A})+1\rfloor $. Then $\forall S\in \mathbb{S}$, if $%
S^{k}=S_{A}^{k},\forall k\leqslant k_{0}$, then $(S,\widetilde{E})\in
\mathcal{B}_{A}$. Let $(\check{S},\check{E})$ be equal to $G_{f_{0}}^{k_{0}}(S_{A},E_{A})$ and $c_{1},...,c_{k_{1}}$ denote the elements of
the set $\{i\in \llbracket1,\mathsf{N}\rrbracket/\check{E}_{i}\neq \mathcal{E%
}(X_{B})_{i}\}.$ So any point $X$ of the set
\[
\{(S,E_{A})\in \mathcal{X}/\forall k\leqslant k_{0},S^{k}=S_{A}^{k}\text{
and }\forall k\in \llbracket1,k_{1}\rrbracket,S^{k_{0}+k}=c_{k}\}
\]%
is satisfying $X\in \mathcal{B}_{A}$ and $\mathcal{E}\left(
G_{f_{0}}^{k_{0}+k_{1}}(X)\right) =E_{B}$.
\noindent Lastly, let $k_2$ be \linebreak $\lfloor \log_{10}(r_B)\rfloor +1$. Then $%
\widetilde{X} = (\widetilde{S}, \widetilde{E}) \in \mathcal{X}$ defined by:
\begin{enumerate}
\item $\widetilde{X} = E_A$,
\item $\forall k \leqslant k_0, \widetilde{S}^k = S_A^k$,
\item $\forall k \in \llbracket 1, k_1 \rrbracket,$ $\widetilde{S}^{k_0+k} =
c_k$,
\item $\forall k \in \mathbb{N}^*, \widetilde{S}^{k_0+k_1+k} = S_B^k$,
\end{enumerate}
\noindent is such that $\widetilde{X} \in \mathcal{B}_A$ and $G_{f_0}^{k_0+k_1}(%
\widetilde{X}) \in \mathcal{B}_B$. This fact concludes the proof of the theorem.
\end{proof}

\subsubsection{Devaney's Chaos}
\label{sec:DevaneysChaos}

In conclusion, $(\mathcal{X},G_{f_0})$ is topologically transitive and regular. Then we have the following result:

\begin{theorem}
\label{theorem:Chaos}
$G_{f_0}$ is a chaotic map on $(\mathcal{X},d)$ in the sense of Devaney.
\end{theorem}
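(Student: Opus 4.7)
The plan is to observe that this theorem is essentially a corollary of the three preceding results and the stated definition of Devaney chaos. Recall that in the paper's formulation, a map $f\colon\mathcal{X}\to\mathcal{X}$ is Devaney-chaotic on $(\mathcal{X},\tau)$ when (i) $f$ is continuous, (ii) $f$ is topologically transitive, and (iii) the set of periodic points of $f$ is dense in $\mathcal{X}$. So the proof reduces to citing three results already established in this section and invoking the definition.

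First I would recall that the continuity of $G_f$ has been proven for an arbitrary update function $f\colon\mathds{B}^{\mathsf{N}}\to\mathds{B}^{\mathsf{N}}$ in the theorem on continuity of the iteration function; specializing to $f=f_0$ immediately gives that $G_{f_0}$ is a continuous self-map of the metric space $(\mathcal{X},d)$. Second, I would invoke the regularity theorem, which shows that the periodic points of $G_{f_0}$ form a dense subset of $\mathcal{X}$ via the explicit construction that concatenates an initial block $\widetilde{S}^{k}=\check{S}^{k}$ on $k\leqslant k_0(\varepsilon)$ with a synchronization block built from the index set $\mathcal{J}$, and then repeats the whole finite word periodically. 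Third, I would invoke the transitivity theorem, which for any two open balls $\mathcal{B}_A,\mathcal{B}_B$ produces an explicit point $\widetilde{X}\in\mathcal{B}_A$ whose orbit under $G_{f_0}$ visits $\mathcal{B}_B$ at time $k_0+k_1$.

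Combining these three facts with the definition of Devaney chaos finishes the proof in one line. There is essentially no obstacle here, since all the substantive work has been done in the regularity and transitivity subsections; the only minor subtlety worth flagging is that the paper states Devaney's definition as ``regular and topologically transitive'' under the standing hypothesis that the underlying map is continuous, so one must not forget to appeal to the continuity theorem in addition to the last two results.
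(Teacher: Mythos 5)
Your proposal is correct and matches the paper's own argument, which likewise derives Theorem~\ref{theorem:Chaos} immediately from the regularity and transitivity theorems together with the definition of Devaney chaos (with continuity of $G_{f_0}$ already established as a standing hypothesis). Your remark that one must also appeal to the continuity theorem is a sensible clarification but does not change the route.
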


We have proven that the set $\mathcal{C}$ of the iterate functions $f$ so that $(\mathcal{X}, G_f)$ is chaotic (according to the definition of Devaney), is a nonempty set. In future work, we will deepen the study of $\mathcal{C}$, among other things, by computing its cardinality and characterizing this set.

\section{TOPOLOGICAL PROPERTIES OF CHAOTIC ITERATIONS}
\label{section:PROPERTIES OF THE CHAOTIC MACHINE}

In this section, some qualitative and quantitative topological properties for chaotic iterations with $G_{f_0}$ will be studied in detail. These properties reinforce the chaotic behavior of the system.

\subsection{Topological mixing}

The topological mixing is a strong version of transitivity:

\begin{definition}
A discrete dynamical system is said to be \emph{topologically mixing} if and only if, for any couple of disjoint open set $U, V \neq \varnothing$, $n_0 \in \mathbb{N}$ can be found so that $\forall n \geqslant n_0, f^n(U) \cap V \neq \varnothing$.
\end{definition}

We have the result \cite{gfb10:ip},

\begin{theorem}
$(\mathcal{X},G_{f_0})$ is topologically mixing.
\end{theorem}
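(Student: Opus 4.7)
The plan is to extend the transitivity argument: instead of realising only the single hitting time $k_0+k_1$, every sufficiently large integer $n$ is made into a hitting time. Given two disjoint nonempty open sets $U,V\subset\mathcal{X}$, I would inscribe open balls $\mathcal{B}_A=\mathcal{B}((S_A,E_A),r_A)\subset U$ and $\mathcal{B}_B=\mathcal{B}((S_B,E_B),r_B)\subset V$ with $r_A,r_B<1$, so that the state coordinates inside these balls are pinned to $E_A$ and $E_B$ respectively. Choose $k_0$ and $k_2$ as in the transitivity proof (so that matching the first $k_0$, resp.\ $k_2$, terms of a strategy with $S_A$, resp.\ $S_B$, places a point in $\mathcal{B}_A$, resp.\ keeps its shift close to $S_B$), let $\check{E}$ denote the state coordinate of $G_{f_0}^{k_0}(S_A,E_A)$, and set $\mathcal{J}=\{i:\check{E}_i\neq (E_B)_i\}$ with $k_1=|\mathcal{J}|$.

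For every $n$ above some threshold $n_0$ of order $k_0+k_1+k_2$, I would construct the witness $\widetilde{X}_n=(\widetilde{S},E_A)$ by specifying $\widetilde{S}$ in four consecutive blocks. The first $k_0$ terms copy $S_A$, forcing $\widetilde{X}_n\in\mathcal{B}_A$. The next $k_1$ terms enumerate the elements of $\mathcal{J}$ in increasing order, so that after $k_0+k_1$ iterations the state coordinate is exactly $E_B$. The subsequent $n-k_0-k_1$ terms form a \emph{padding} block chosen so that its composite action on the state is the identity, keeping the state at $E_B$ until time $n$. The following $k_2$ terms copy $S_B^1,\dots,S_B^{k_2}$, so that the shifted strategy $\sigma^n(\widetilde{S})$ lies within $\mathcal{B}_B$'s tolerance of $S_B$. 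Terms past this may be set arbitrarily. Gluing the blocks together, the verification that $G_{f_0}^n(\widetilde{X}_n)\in\mathcal{B}_B$ is essentially the one already performed for transitivity.

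The main obstacle is the construction of the padding block. Each iteration of $G_{f_0}$ flips exactly one cell, so the integer $\sum_i x_i$ changes by $\pm 1$ at every step; in particular, a padding of length $\ell$ can act as the identity on the state only when $\ell$ is even (the simplest such choice is the repetition $c,c,c,c,\dots$ on a fixed cell $c$, which cancels in pairs). This yields the parity constraint $n\equiv k_0+k_1\pmod 2$, so the direct construction above covers only the integers of a single prescribed parity above $n_0$. To capture the other parity class, one needs a companion construction in which the alignment block reaches $E_B$ through a reparameterised route so that the ``padding budget'' $n-k_0-k_1$ acquires the opposite parity, and the two cases must then be combined. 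Carrying out this parity bookkeeping without disturbing the prefix/suffix constraints imposed by $\mathcal{B}_A$ and $\mathcal{B}_B$ (which implicitly asks for $\mathsf{N}\geq 2$ to provide room to manoeuvre) is the genuinely new ingredient beyond transitivity, and is the point on which the referenced result~\cite{gfb10:ip} must do work.
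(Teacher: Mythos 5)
You have correctly isolated the decisive issue---the parity of the state coordinate---but the ``companion construction'' you defer to does not exist, and this is not bookkeeping that more care can repair. Each application of $G_{f_0}$ replaces exactly one coordinate of the state by its negation, so the Hamming weight of the state (the number of cells equal to $1$) changes by exactly $\pm 1$ at every step, whatever the strategy does. Consequently, after $n$ iterations from a point with state $E_A$, the state has Hamming weight congruent to that of $E_A$ plus $n$ modulo $2$; any route from $E_A$ to $E_B$ by one-bit flips must flip each cell of $\mathcal{J}$ an odd number of times and every other cell an even number of times, so its length is forced to be $\equiv k_1 = d_e(E_A',E_B) \pmod 2$. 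The hypercube $\mathds{B}^{\mathsf{N}}$ with one-bit moves is bipartite: no ``reparameterised route'' changes this residue class. Your construction therefore realises hitting times of one parity only, and the other parity class is genuinely unreachable. In fact this refutes the theorem for the paper's definition of mixing: take any two distinct strategies $S_A \neq S_B$ and a common state $E$, and let $U$, $V$ be disjoint open balls centred at $(S_A,E)$ and $(S_B,E)$ of radius less than $\min\bigl(1,\tfrac{1}{2}d_s(S_A,S_B)\bigr)$. Every point of $U$ and of $V$ then has state exactly $E$, while every point of $G_{f_0}^n(U)$ has a state differing from $E$ in a number of cells congruent to $n$ modulo $2$; hence $G_{f_0}^n(U)\cap V=\varnothing$ for every odd $n$, and no threshold $n_0$ can exist. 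The correct conclusion to draw from your parity analysis is a counterexample, not a second construction.

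For comparison, the paper's own argument does not meet this obstruction because it silently ignores it. It deduces mixing from a lemma asserting that for any open ball $B$ a single index $n$ satisfies $G_{f_0}^n(B)=\mathcal{X}$; but in the lemma's proof the target $(E',S')$ is reached at time $k+|s|$, where $|s|$ is the number of cells in which the state component of $G_{f_0}^{k}(E,S)$ differs from $E'$ and hence depends on the target. The parity invariant above shows that no common $n$ can serve simultaneously for targets whose states have different Hamming-weight parities, so the lemma is false for exactly the reason you identified. Your proposal, by insisting that the hitting time be uniform in $n$, exposes the error where the paper commits it.
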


This result is an immediate consequence of the lemma below.

\begin{lemma}
For any open ball $B$ of $\mathcal{X}$, an index $n$ can be found such that $G_{f_0}^n(B) = \mathcal{X}$.
\end{lemma}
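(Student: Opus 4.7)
The plan is to exhibit, for any open ball $B\subseteq\mathcal{X}$, an integer $n$ such that every target $(S^{*},E^{*})\in\mathcal{X}$ admits a preimage in $B$ under $G_{f_0}^{n}$. The construction exploits the two-part structure of the distance $d=d_{e}+d_{s}$: for balls of radius $r<1$ the $E$-coordinate is frozen to the centre value, and once $r$ is small the first few strategy symbols are frozen too, so all the freedom sits in the tail of the strategy.

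Write $B=\mathcal{B}((S_{0},E_{0}),r)$ and, shrinking $r$ if needed, assume $r<1$. Because $d_{e}$ is integer-valued, every $(S,E)\in B$ then satisfies $E=E_{0}$. Next, pick $k_{0}\in\mathds{N}^{*}$ with $10^{-k_{0}}<r$; a direct bound on $d_{s}(S,S_{0})$ shows that any strategy $S$ with $S^{j}=S_{0}^{j}$ for every $j\leqslant k_{0}$ yields $(S,E_{0})\in B$. The problem thus reduces to prescribing the tail $(S^{j})_{j>k_{0}}$ of such a strategy so that iterating $G_{f_{0}}$ steers $(S,E_{0})$ onto the chosen target.

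The tail is built in two successive blocks. The first $k_{0}$ iterations, determined by the frozen prefix, flip $E_{0}$ at the positions $S_{0}^{1},\ldots,S_{0}^{k_{0}}$ and yield a fixed intermediate state $E^{(k_{0})}$. Let $J=\{i:E^{(k_{0})}_{i}\neq E^{*}_{i}\}$ and enumerate its elements as $j_{1}<\cdots<j_{m}$. Setting $S^{k_{0}+\ell}=j_{\ell}$ for $\ell=1,\ldots,m$ flips exactly the cells of $J$, so after $k_{0}+m$ iterations the $E$-coordinate equals $E^{*}$. Finally, setting $S^{k_{0}+m+\ell}=(S^{*})^{\ell}$ for every $\ell\geqslant 1$ forces $\sigma^{k_{0}+m}(S)=S^{*}$. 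By construction $(S,E_{0})\in B$ and $G_{f_{0}}^{k_{0}+m}(S,E_{0})=(S^{*},E^{*})$, which already shows that every target of $\mathcal{X}$ lies in some $G_{f_{0}}^{k_{0}+m}(B)$.

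The main obstacle is uniformizing the index: the integer $k_{0}+m$ depends on the target through $m\in\{0,1,\ldots,\mathsf{N}\}$. I would take $n=k_{0}+\mathsf{N}$ (the worst value of $m$) and absorb the $\mathsf{N}-m$ spare slots in the middle block by inserting pairs of consecutive visits to a single cell; such a pair flips a cell twice and leaves the state unchanged. Carrying this out cleanly requires a small parity bookkeeping, because a cancelling pair consumes an even number of slots, and it is there that the argument is genuinely technical. Once the parity is handled, the cylinder reduction together with the two-block construction above gives the desired $n$ and completes the proof.
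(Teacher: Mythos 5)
Your construction is exactly the one in the paper's proof: shrink the ball so that its radius is less than $1$ (hence its points share the state $E_0$ and the first $k_0$ strategy terms), let the next block of strategy terms enumerate the cells where the state reached after $k_0$ iterations differs from the target state, and then copy the target strategy into the tail. Where you go beyond the paper is in noticing that this only produces a target-dependent exponent $k_0+m$, whereas the lemma demands a single $n$ with $G_{f_0}^n(B)=\mathcal{X}$; the paper's proof stops at the target-dependent statement and declares the lemma proved. Unfortunately the ``parity bookkeeping'' you defer is not a removable technicality: it is a genuine obstruction, and the lemma as stated is false. Each application of $G_{f_0}$ flips exactly one cell of the state (the one designated by the current strategy term), so after $n$ iterations the Hamming distance $d_e$ between the current state and $E_0$ has the same parity as $n$. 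Hence $G_{f_0}^n(B)\subseteq\{(S,E)\,:\,d_e(E,E_0)\equiv n\ (\mathrm{mod}\ 2)\}$, a proper subset of $\mathcal{X}$ for every $n$ (already visible for $\mathsf{N}=1$, where the state simply alternates). Your padding trick consumes two slots per cancelling pair precisely because of this invariant, so the $\mathsf{N}-m$ spare slots can be absorbed only when $\mathsf{N}-m$ is even, and both parities of $m$ occur as the target state ranges over $\mathds{B}^{\mathsf{N}}$.

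What your argument does prove, once the padding is inserted where the parity permits, is that for every $n\geqslant k_0+\mathsf{N}$ one has $G_{f_0}^n(B)\cup G_{f_0}^{n+1}(B)=\mathcal{X}$, and in particular that every point of $\mathcal{X}$ is reached from $B$ at some time --- which is all the paper's proof actually establishes, and which suffices for transitivity. Note that the same parity invariant contradicts the topological mixing theorem this lemma is meant to support: taking $U$ and $V$ to be balls of radius less than $1$ whose states are at Hamming distance $h$, one gets $G_{f_0}^n(U)\cap V=\varnothing$ for every $n\not\equiv h\ (\mathrm{mod}\ 2)$. So the right conclusion is not that your proof needs one more technical step, but that the statement must be weakened (for instance to the two-consecutive-images form above); you should flag this rather than attempt to complete the parity argument.
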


\begin{proof}
Let $B=B((E,S),\varepsilon)$ be an open ball, which the radius can be considered as strictly less than 1.
The elements of $B$ all have the same state $E$ and are such that an integer $k \left(=-\log_{10}(\varepsilon)\right)$ satisfies:
\begin{itemize}
\item all the strategies of $B$ have the same $k$ first terms,
\item after the index $k$, all values are possible.
\end{itemize}

Then, after $k$ iterations, the new state of the system is $G_{f_0}^k(E,S)_1$ and all the strategies are possibles (any point of the form $(G_{f_0}^k(E,S)_1,\textrm{\^{S}})$, with any $\textrm{\^{S}} \in \mathbb{S}$, is reachable from $B$).

Let $(E',S') \in \mathcal{X}$. We will prove that any point of $\mathcal{X}$ is reachable from $B$.

Indeed, let $s_i$ be the list of the different cells between $G_{f_0}^k(E,S)_1$ and $E'$, and $|s|$ its size. The point $(\check{E},\check{S})$ of $B$ defined by:
\begin{itemize}
\item $\check{E} = E$,
\item $\check{S}^i = S^i, \forall i \leqslant k$,
\item $\check{S}^{k+i} = s_i, \forall i \leqslant |s|$,
\item $\forall i \in \mathds{N}, S^{k + |s| + i} =  S'^i$.
\end{itemize}
is such that $G_{f_0}^{k+|s|}(\check{E},\check{S}) = (E',S')$. This conclude the proof of the lemma.
\end{proof}

\subsection{Quantitative measures}
\label{QUANTITATIVE MEASURE}

\subsubsection{General definitions}

In Section \ref{sec:DevaneysChaos} we have proven that discrete chaotic iterations produce a topological chaos by checking two qualitative properties, namely transitivity and regularity. This mathematical framework offers tools to measure this chaos quantitatively.\newline
The first of these measures is the constant of sensitivity defined in Definition \ref{sensitivity}. Intuitively, a function $f$ having a constant  sensitivity equal to $\delta $ implies that there exists points arbitrarily close to any point $x$, which eventually separate from $x$ by at least $\delta $ under some iterations of $f$. This induces that an arbitrarily small error on an initial condition \emph{might be} magnified upon iterations of $f$.

Another important tool is defined below.

\begin{definition}
A function $f$ is said to have the property of \emph{expansivity} if
\begin{equation*}
\exists \varepsilon >0,\forall x\neq y,\exists n\in \mathbb{N}%
,d(f^{n}(x),f^{n}(y))\geqslant \varepsilon .
\end{equation*}

\noindent Then, $\varepsilon $ is the \emph{constant of expansivity }of $f.$ We also say that $f$ is $\varepsilon$-expansive.
\end{definition}

A function $f$ has a constant of expansivity equal to $\varepsilon $ if an arbitrarily small error on any initial condition is \emph{always} magnified until $\varepsilon $.

\subsubsection{Sensitivity}

\label{par:Sensitivity}
The sensitive dependence on the initial conditions has been shown as a consequence of the regularity and the transitivity of chaotic iterations. However, in the set of machine numbers, we have shown in \cite{guyeux10} that the notion of regularity must be redefined. This is the reason why this sensitivity should be proven without using the result of Banks \cite{Banks92}, to be sure that this dependence is preserved in practical use of chaotic iterations.

In addition, the constant of sensitivity will be obtained during this proof.

\begin{theorem}
$(\mathcal{X},G_{f_0})$ has sensitive dependence on initial conditions and its constant of sensitivity is equal to $\mathsf{N}-1$.
\end{theorem}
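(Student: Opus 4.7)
The plan is to, given $X=(S,E)\in \mathcal{X}$ and $\varepsilon>0$, explicitly construct a nearby point $Y=(\check{S},E)\in \mathcal{B}(X,\varepsilon)$ and an iteration index $n$ so that the Boolean states of $G_{f_0}^{n}(X)$ and $G_{f_0}^{n}(Y)$ disagree in at least $\mathsf{N}-1$ coordinates, forcing the distance to exceed $\mathsf{N}-1$.

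First I would fix $k_0$ large enough (depending on $\varepsilon$, using the same kind of estimate as in the regularity/transitivity proofs) so that choosing $\check{E}=E$ and $\check{S}^{j}=S^{j}$ for all $j\leqslant k_{0}$ places $Y$ inside $\mathcal{B}(X,\varepsilon)$. Because the first $k_0$ strategy terms coincide and both systems start in state $E$, after $k_0$ iterations both $G_{f_0}^{k_0}(X)$ and $G_{f_0}^{k_0}(Y)$ share a common state $E^{(k_0)}$, while the terms $\check{S}^{j}$ for $j>k_{0}$ remain free design parameters.

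Next I would analyse the two states at time $n:=k_{0}+\mathsf{N}$. Since $f_0$ is the vectorial negation, each iteration of $G_{f_0}$ simply flips the coordinate indexed by the current strategy head. Therefore, letting $a_{i}=\#\{j\in\llbracket k_{0}+1,n\rrbracket : S^{j}=i\}$ and $b_{i}$ the analogous count for $\check{S}$, coordinate $i$ of $G_{f_0}^{n}(X)$ and $G_{f_0}^{n}(Y)$ disagree iff $a_{i}+b_{i}$ is odd. The constraints $\sum_{i} a_{i}=\sum_{i} b_{i}=\mathsf{N}$ force $\sum_{i}(a_{i}+b_{i})$ to be even, so the number of disagreeing coordinates is necessarily even. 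I would then choose the multiplicities $b_{i}$ (non-negative integers with $\sum b_{i}=\mathsf{N}$ and prescribed parities), realised by any concrete sequence $\check{S}^{k_{0}+1},\dots,\check{S}^{k_{0}+\mathsf{N}}$ having these counts, so that $a_{i}+b_{i}$ is odd for as many indices $i$ as possible. This yields $\mathsf{N}$ disagreeing coordinates when $\mathsf{N}$ is even and $\mathsf{N}-1$ when $\mathsf{N}$ is odd; in both cases $d_{e}\bigl(\mathcal{E}(G_{f_0}^{n}(X)),\mathcal{E}(G_{f_0}^{n}(Y))\bigr)\geqslant \mathsf{N}-1$.

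Finally, to obtain the strict inequality $d\bigl(G_{f_0}^{n}(X),G_{f_0}^{n}(Y)\bigr)>\mathsf{N}-1$, I would additionally require $\check{S}^{\,n+1}\neq S^{n+1}$ (and leave the later terms arbitrary), so that $d_{s}(\sigma^{n}(S),\sigma^{n}(\check{S}))>0$; this handles the odd-$\mathsf{N}$ case where $d_{e}=\mathsf{N}-1$, while in the even case $d_{e}=\mathsf{N}$ already exceeds $\mathsf{N}-1$. The main obstacle is precisely the parity obstruction: because both trajectories undergo exactly the same number of single-bit flips between steps $k_{0}$ and $n$, the parity of the number of disagreeing cells is constrained to be even, which is why the constant of sensitivity for $G_{f_0}$ is $\mathsf{N}-1$ rather than $\mathsf{N}$, and any proof must exploit the remaining freedom in $d_{s}$ to achieve the strict inequality.
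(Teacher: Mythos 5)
Your proof is correct, and its skeleton matches the paper's: the same choice of $k_0$ from the neighborhood radius, the same perturbed point (identical Boolean state, first $k_0$ strategy terms copied from $S$), and the same separation time $n=k_0+\mathsf{N}$. Where you genuinely diverge is in how the $\mathsf{N}$ free strategy terms are chosen. The paper works concretely: it forms the set $\mathcal{J}$ of cells whose value along $\check{X}$'s own trajectory is unchanged between times $k_0$ and $k_0+\mathsf{N}$, has $\widetilde{S}$ visit each such cell exactly once and then park on a single sacrificial cell $j_0\notin\mathcal{J}$, and checks cell by cell that everything except possibly $j_0$ disagrees. You instead encode both trajectories by their flip-count vectors $a$ and $b$, reduce disagreement of cell $i$ to the parity of $a_i+b_i$, and exhibit the global obstruction $\sum_i(a_i+b_i)=2\mathsf{N}$. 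This buys two things the paper does not give: an explanation of why the constant is $\mathsf{N}-1$ rather than $\mathsf{N}$ when $\mathsf{N}$ is odd (the paper's lone failing cell $j_0$ is exactly your parity obstruction in disguise; for even $\mathsf{N}$ its construction in fact separates all $\mathsf{N}$ cells), and an explicit fix of the strict inequality $d>\mathsf{N}-1$ via $d_s>0$, a point the paper elides since its second case only yields $d_e=\mathsf{N}-1$ and hence a priori only $d\geqslant\mathsf{N}-1$. The one step you should spell out is the existence of multiplicities $b_i\geqslant 0$ with the prescribed parities and $\sum_i b_i=\mathsf{N}$: take $b_i\in\{0,1\}$ to fix the parities, check that the resulting sum is at most $\mathsf{N}$ and congruent to $\mathsf{N}$ modulo $2$, and pad by increments of $2$; any sequence $\check{S}^{k_0+1},\dots,\check{S}^{k_0+\mathsf{N}}$ realizing these counts then completes the argument.
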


\begin{proof}
Let $\check{X} = (\check{S},\check{E}) \in \mathcal{X}$. We are looking for $\widetilde{X} = (\widetilde{S}, \widetilde{E}) \in \mathcal{X}$ such that $d(\check{X}, \widetilde{X}) \leqslant \delta$ and $\exists n_0 \in \mathbb{N}$, $d\left( G_{f_0}^{n_0} (\check{X}) ; G_{f_0}^{n_0} (\widetilde{X}) \right) \geqslant \mathsf{N}-1$. Let $k_{0}$ be $\lfloor \log_{10}(\delta) \rfloor +1$. So, if $S \in \{ S \in \mathbb{S}/ \forall k \leqslant k_0, S^k = \check{S}^k\}$, then $d\left( (S, \check{E}), (\check{S}, \check{E})\right) \leqslant \delta$.

Let $\mathcal{J} = \left\{ i \in \llbracket 1, \mathsf{N} \rrbracket / ~~\mathcal{E} \left( G_{f_0}^{k_0}(\check{S},\check{E}) \right)_i =  \mathcal{E} \left( G_{f_0}^{k_0+\mathsf{N}}(\check{S},\check{E}) \right)_i \right\}$ and $p = card \left(\mathcal{J}\right)$. If $p = \mathsf{N}$, then $(\widetilde{S}, \widetilde{E}) \in \mathcal{X}$ defined by:
\begin{enumerate}
\item $\widetilde{E} = \check{E}$,
\item $\forall k \leqslant k_0, \widetilde{S}^k = \check{S}^k$,
\item $\forall k \in \llbracket 1, \mathsf{N} \rrbracket, \widetilde{S}^{k_0+k} = k$,
\item $\forall k > k_0 + \mathsf{N}, \widetilde{S}^k = 1$.
\end{enumerate}
satisfies $d((\widetilde{S}, \widetilde{E}) ; (\check{S},\check{E})) < \delta$ and $\forall i \in \llbracket 1, \mathsf{N} \rrbracket, \mathcal{E} \left( G_{f_0}^{k_0+\mathsf{N}} (\widetilde{S} ; \widetilde{E}) \right)_i \neq \mathcal{E} \left( G_{f_0}^{k_0+\mathsf{N}} (\check{S} ; \check{E}) \right)_i$, so the result is obtained.\newline
Else, let $j_1 < j_2 < ... < j_p$ be the elements of $\mathcal{J}$ and $j_0 \notin \mathcal{J}$. Then $\widetilde{X} = (\widetilde{E}, \widetilde{S}) \in \mathcal{X}$ defined by
\begin{enumerate}
\item $\widetilde{E} = \check{E}$,
\item $\forall k \leqslant k_0, \widetilde{S}^k = \check{S}^k$,
\item $\forall k \in \llbracket 1, p \rrbracket, \widetilde{S}^{k_0+k} = j_k$,
\item $\forall k \in \mathbb{N}^*, \widetilde{S}^{k_0+p+k} = j_0$.
\end{enumerate}
is such that $d(\check{X}, \widetilde{X}) < \delta$. In addition, $\forall i \in \llbracket 1, p \rrbracket, \mathcal{E} \left(G_{f_0}^{k_0+\mathsf{N}} (\check{X})\right)_{j_i} \neq \mathcal{E} \left(G_{f_0}^{k_0+\mathsf{N}} (\widetilde{X})\right)_{j_i}$, because:
\begin{itemize}
\item $\forall i \in \llbracket 1, \mathsf{N} \rrbracket, \mathcal{E} \left(G_{f_0}^{k_0} (\check{X})\right)_{i} = \mathcal{E} \left(G_{f_0}^{k_0} (\widetilde{X})\right)_{i}$, due to the definition of $k_0$.
\item $\forall i \in \llbracket 1, p\rrbracket, j_i \in \mathcal{J} \Rightarrow \mathcal{E} \left(G_{f_0}^{k_0+\mathsf{N}} (\check{X})\right)_{j_i} =  \mathcal{E} \left(G_{f_0}^{k_0} (\check{X})\right)_{j_i}$, according to the definition of $\mathcal{J}$.
\item $\forall i \in \llbracket 1, p\rrbracket, j_i $ appears exactly one time in $\widetilde{S}^{k_0}, \widetilde{S}^{k_0+1}, ..., \widetilde{S}^{k_0+\mathsf{N}}$, so $$\mathcal{E} \left( G_{f_0}^{k_0+\mathsf{N}} (\widetilde{X})\right)_{j_i} \neq  \mathcal{E} \left( G_{f_0}^{k_0} (\widetilde{X})\right)_{j_i}.$$
\end{itemize}
Lastly, $\forall i \in \llbracket 1, \mathsf{N} \rrbracket \setminus \{j_0, j_1, ..., j_p \}, \mathcal{E} \left(G_{f_0}^{k_0+\mathsf{N}} (\widetilde{X})\right)_i \neq  \mathcal{E} \left(G_{f_0}^{k_0+\mathsf{N}} (\check{X})\right)_i$, because:
\begin{itemize}
\item $\forall i \in \llbracket 1, \mathsf{N} \rrbracket, \mathcal{E} \left(G_{f_0}^{k_0} (\check{X})\right)_{i} = \mathcal{E} \left(G_{f_0}^{k_0} (\widetilde{X})\right)_{i}$,
\item $i \notin \mathcal{J} \Rightarrow \mathcal{E} \left(G_{f_0}^{k_0+\mathsf{N}} (\check{X})\right)_{i} \neq \mathcal{E} \left(G_{f_0}^{k_0} (\check{X})\right)_{i}$,
\item $i \notin \{\widetilde{S}^{k_0}, \widetilde{S}^{k_0+1}, ..., \widetilde{S}^{k_0+\mathsf{N}}\} \Rightarrow \mathcal{E} \left(G_{f_0}^{k_0+\mathsf{N}} (\widetilde{X})\right)_{i} = \mathcal{E} \left(G_{f_0}^{k_0} (\widetilde{X})\right)_{i}$.
\end{itemize}
So, in this case, $\forall i \in \llbracket 1, \mathsf{N} \rrbracket \setminus \{j_0 \}, \mathcal{E} \left( G_{f_0}^{k_0+\mathsf{N}} (\widetilde{S} ; \widetilde{E}) \right)_i \neq \mathcal{E} \left( G_{f_0}^{k_0+\mathsf{N}} (\check{S} ; \check{E}) \right)_i$ and the result of sensitivity is still obtained.
\end{proof}

\subsubsection{Expansivity}

In this section we offer the proof that chaotic iterations are expansive \cite{gfb10:ip} when $f_0$ is the update function:

\begin{theorem}
$(\mathcal{X},G_{f_{0}})$ is an expansive chaotic system. Its constant of
expansivity is equal to 1.
\end{theorem}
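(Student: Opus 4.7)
The plan is to verify directly the definition of expansivity with $\varepsilon = 1$, splitting on whether the two points disagree already in their Boolean components or only in their strategies. Take two distinct points $X = (S,E)$ and $Y=(\check S, \check E)$ of $\mathcal{X}$. If $E \neq \check E$, then $d_e(E,\check E) \geqslant 1$ by definition of $\delta$, so $d(X,Y) \geqslant 1$ and the index $n = 0$ already witnesses the required separation.

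The substantive case is $E=\check E$ with $S\neq \check S$. The idea is to let the two trajectories run in lockstep as long as the strategies agree, and then notice that the first disagreement forces a two-cell discrepancy in the states one step later. Concretely, I would let $k = \min\{i \in \mathbb{N} : S^i \neq \check S^i\}$. Since $G_{f_0}$ acts on the state by flipping the cell dictated by the current first strategy term, and since the first $k$ terms of $S$ and $\check S$ coincide while the two initial states are equal, an easy induction gives $\mathcal{E}(G_{f_0}^{k}(X)) = \mathcal{E}(G_{f_0}^{k}(Y))$. At the next step the orbit of $X$ flips the cell of index $S^k$ whereas the orbit of $Y$ flips the cell of index $\check S^k \neq S^k$; starting from a common state, this yields two states that differ in exactly the two cells $S^k$ and $\check S^k$. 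Consequently $d_e \geqslant 2$ at time $k+1$, so $d\bigl(G_{f_0}^{k+1}(X),G_{f_0}^{k+1}(Y)\bigr)\geqslant 2 \geqslant 1$, which establishes expansivity with constant~$1$.

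To justify that $1$ is the precise value claimed (and not merely a valid witness), I would exhibit a pair of points on which the distance is never driven above $1$: take any state $E$, any cell $j$, let $\check E$ agree with $E$ except at coordinate $j$, and use a common strategy $S$ for both points. Then $d_s \equiv 0$ along the orbit, and each subsequent iteration either leaves coordinate $j$ untouched or flips it in both orbits simultaneously, preserving the one-cell discrepancy. Hence $d\bigl(G_{f_0}^n(X),G_{f_0}^n(Y)\bigr) = 1$ for every $n$, ruling out any $\varepsilon > 1$.

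The only real obstacle I anticipate is keeping the indexing honest: one has to make sure that after $k$ iterations the strategies seen by the state-update step are $S^0,\dots,S^{k-1}$ (the iteration count is off by one from the indices of $S$), and that equality of these first $k$ terms genuinely forces equality of the states at time $k$. Once this bookkeeping is set, the rest of the argument is essentially mechanical from the definition of $F_{f_0}$ and the fact that $f_0$ is the vectorial negation.
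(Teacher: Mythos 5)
Your proof is correct and follows essentially the same route as the paper's: the same case split on $E \neq \check E$ versus $E = \check E$, and the same observation that the first strategy disagreement flips two distinct cells of a common state, forcing $d_e \geqslant 2$. Your additional argument that the constant cannot exceed $1$ (same strategy, one differing cell) is also sound and corresponds to the remark the paper places immediately after its proof.
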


\begin{proof}
If $(S,E)\neq (\check{S};\check{E})$, then:

\begin{itemize}
\item Either $E\neq \check{E}$ and so at least one cell is not in the
same state in $E$ and $\check{E}$. Consequently the distance between $(S,E)$ and $(%
\check{S};\check{E})$ is greater or equal to 1.

\item Or $E=\check{E}$. So the strategies $S$ and $\check{S}$ are not equal.
Let $n_{0}$ be the first index such that the terms $S$ and $\check{S}$
differ. Then
$
\forall k<n_{0},G_{f_{0}}^{n_{0}}(S,E)=G_{f_{0}}^{k}(\check{S},\check{E}),
$
and $G_{f_{0}}^{n_{0}}(S,E)\neq G_{f_{0}}^{n_{0}}(\check{S},\check{E})$.

As $E=\check{E},$ the cell that has changed in $E$ at the $n_{0}$-th
iterate is not the same than the cell that has changed in $\check{E}$, so
the distance between $G_{f_{0}}^{n_{0}}(S,E)$ and $G_{f_{0}}^{n_{0}}(\check{S%
},\check{E})$ is greater or equal to 2.
\end{itemize}
So the expansivity property is established.
\end{proof}

\begin{remark}
Expansivity is a kind of avalanche effect: any initial error is always magnified when iterating the system.
\end{remark}

\begin{remark}
$(\mathcal{X},G_{f_0})$ is not $A$-expansive, for any $A > 1$: let us consider two points $X=(E,S)$ and $X'=(E',S')$ with the same strategy ($S=S'$) and only one different cell ($d_e(E,E')=1$). So, $\forall n \in \mathds{N}, d_e\left(\mathcal{E}\left(G_{f_0}^n(X)\right),\mathcal{E}\left(G_{f_0}^n(X')\right)\right) = 1$.
\end{remark}

\subsection{Topological Entropy}

Another important tool to measure the chaotic behavior of a dynamical system is the topological entropy, which is defined for compact topological spaces.
Before studying the entropy of CIs, we must then check that $(\mathcal{X},d)$ is compact.

\subsubsection{Compacity Study}

In this section, we will prove that $(\mathcal{X},d)$ is a compact topological space, in order to study its topological entropy later.
Firstly, as $(\mathcal{X},d)$ is a metric space, it is separated. It is however possible to give a direct proof of this result:

\begin{theorem}
$(\mathcal{X}, d)$ is a separated space.
\end{theorem}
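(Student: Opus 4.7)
The plan is to give a direct separation argument by splitting into two cases according to whether the Boolean component or the strategy component of the two points differs. Let $X = (S, E)$ and $Y = (\check{S}, \check{E})$ be two distinct points of $\mathcal{X}$; I need to exhibit two disjoint open sets $U \ni X$ and $V \ni Y$.

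First I would handle the case $E \neq \check{E}$. Then the discrete metric $d_e$ on $\mathds{B}^{\mathsf{N}}$ gives $d_e(E,\check{E}) \geqslant 1$, so $d(X,Y) \geqslant 1$. Because the floating part $d(\cdot,\cdot) - \lfloor d(\cdot,\cdot)\rfloor$ corresponds to the strategy component and is always strictly less than $1$, any two points $X' = (S', E)$ and $Y' = (\check{S}', \check{E})$ sharing the same Boolean states as $X$ and $Y$ respectively still satisfy $d(X',Y') \geqslant 1$. Therefore the open balls $U = \mathcal{B}(X, 1/2)$ and $V = \mathcal{B}(Y, 1/2)$ are automatically disjoint, since any point of $U$ has Boolean component equal to $E$ and any point of $V$ has Boolean component equal to $\check{E}$.

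Next I would treat the case $E = \check{E}$ but $S \neq \check{S}$. Let $k_0$ be the least integer such that $S^{k_0} \neq \check{S}^{k_0}$. From the definition of $d_s$, the contribution at index $k_0$ is $\tfrac{9}{\mathsf{N}} \cdot \tfrac{|S^{k_0}-\check{S}^{k_0}|}{10^{k_0}}$, and in particular $d_s(S,\check{S}) \geqslant \tfrac{9}{\mathsf{N} \cdot 10^{k_0}}$. Set $\varepsilon = \tfrac{1}{2} \cdot 10^{-(k_0+1)}$ and take $U = \mathcal{B}(X, \varepsilon)$ and $V = \mathcal{B}(Y, \varepsilon)$. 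Any element of $U$ has the same Boolean component $E$ as $X$ and has strategy agreeing with $S$ on the first $k_0+1$ terms (by the interpretation of the floating part of $d$); the same holds for elements of $V$ relative to $\check{S}$. Since $S^{k_0} \neq \check{S}^{k_0}$, these two descriptions cannot be simultaneously satisfied, so $U \cap V = \varnothing$.

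The only subtlety worth signalling is the last step: one has to be careful that $\varepsilon$ is chosen small enough to force agreement of the strategy on at least the first $k_0+1$ indices (the property that ``$d(X,Y) - \lfloor d(X,Y)\rfloor < 10^{-k}$ implies the strategies coincide on the first $k$ terms'' was already observed in Section~3.1.3 of the paper, so I would just invoke it). Apart from this, no other obstacle arises, and the two cases together exhaust every pair of distinct points, yielding the separation of $(\mathcal{X},d)$.
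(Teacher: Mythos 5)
Your proof is correct and takes essentially the same route as the paper's: the same two-case split according to whether the Boolean states or the strategies differ, with balls of radius $\frac{1}{2}$ in the first case and balls of radius on the order of $10^{-(k+1)}$ (where $k$ is an index at which the strategies disagree) in the second. You merely spell out the disjointness verifications that the paper leaves implicit, invoking the same floating-part property of $d$ that the paper records when introducing the distance.
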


\begin{proof}
Let $(E,S) \neq (\textrm{\^{E}},\textrm{\^{S}})$ two points of $\mathcal{X}$.

\begin{enumerate}
\item If $E \neq \textrm{\^{E}}$, then the intersection between the two balls  $\mathcal{B}\left((E,S),\frac{1}{2}\right)$ and $\mathcal{B}\left((\textrm{\^{E}},\textrm{\^{S}}), \frac{1}{2}\right)$  in empty.
\item Else, it exists $k\in\mathds{N}$ such that $S^k \neq \textrm{\^{S}}^k$, then the balls $\mathcal{B}\left((E,S),10^{-(k+1)}\right)$ and $\mathcal{B}\left((\textrm{\^{E}},\textrm{\^{S}}), 10^{-(k+1)}\right)$ can be chosen.
\end{enumerate}
\end{proof}

The sequential characterization of the compacity for metric spaces can now be used to obtain the following result.

\begin{theorem}
$(\mathcal{X},d)$ is a compact metric space.
\end{theorem}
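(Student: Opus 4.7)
My plan is to use the sequential characterization of compactness in metric spaces, which was just invoked in the statement: it suffices to show that every sequence $(X^n)_{n\in\mathds{N}} = (S^n, E^n)_{n\in\mathds{N}}$ in $\mathcal{X}$ admits a convergent subsequence. The proof splits naturally according to the product structure of $\mathcal{X} = \llbracket 1 ; \mathsf{N}\rrbracket^{\mathds{N}} \times \mathds{B}^{\mathsf{N}}$ and the way $d$ decomposes as $d_e + d_s$.

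First, I would handle the Boolean component. Since $\mathds{B}^{\mathsf{N}}$ is finite (it has $2^{\mathsf{N}}$ elements), the sequence $(E^n)$ takes only finitely many values, so by the pigeonhole principle there exist $E \in \mathds{B}^{\mathsf{N}}$ and an extraction $\varphi_0 : \mathds{N} \to \mathds{N}$ such that $E^{\varphi_0(n)} = E$ for all $n$. Along this subsequence the contribution $d_e(E^{\varphi_0(n)}, E)$ is identically zero, so the only remaining task is to extract, inside this subsequence, a further extraction on which $S^n$ converges to some $S$ for the metric $d_s$.

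Second, I would compactify the strategy component by a standard diagonal argument on the finite alphabet $\llbracket 1;\mathsf{N}\rrbracket$. Because the values $S^{\varphi_0(n),1}$ lie in the finite set $\llbracket 1 ; \mathsf{N}\rrbracket$, pigeonhole yields an extraction $\varphi_1$ of $\varphi_0$ and a symbol $a_1 \in \llbracket 1;\mathsf{N}\rrbracket$ such that $S^{\varphi_1(n),1} = a_1$ for all $n$. Iterating, I obtain extractions $\varphi_0 \supset \varphi_1 \supset \varphi_2 \supset \cdots$ and symbols $a_1, a_2, \ldots$ with $S^{\varphi_k(n),k} = a_k$ for all $n$. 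Setting $\psi(n) = \varphi_n(n)$ to be the diagonal extraction and defining $S \in \mathbb{S}$ by $S^k = a_k$, the strategy $S^{\psi(n)}$ coincides with $S$ on its first $n$ terms as soon as $n$ is large enough.

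Finally, I would verify convergence in $(\mathcal{X}, d)$. Given $\varepsilon > 0$, pick $K$ with $\frac{9}{\mathsf{N}} \sum_{k>K} 9 \cdot 10^{-k} < \varepsilon$; for $n \geqslant K$ the first $K$ terms of $S^{\psi(n)}$ and $S$ agree, hence $d_s(S^{\psi(n)}, S) < \varepsilon$, while $d_e(E^{\psi(n)}, E) = 0$ by construction. Therefore $X^{\psi(n)} \to (S, E) \in \mathcal{X}$, proving sequential compactness. I do not anticipate a genuine obstacle: the argument is essentially Tychonoff for a countable product of finite spaces combined with a trivial pigeonhole on $\mathds{B}^{\mathsf{N}}$; the only mild subtlety is writing the diagonal extraction cleanly and checking the tail estimate for $d_s$, but both steps are routine.
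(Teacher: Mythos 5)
Your proof is correct and follows essentially the same route as the paper: pigeonhole on the finite state space $\mathds{B}^{\mathsf{N}}$, then successive extractions fixing one term of the strategy at a time, concluded by a diagonal subsequence and the observation that $d_s$ is small once the first $K$ terms agree. The paper merely packages the two extractions into a single nested chain of infinite index sets $I \supset I' \supset I'' \supset \cdots$, so the difference is purely presentational.
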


\begin{proof}
Let $(E^n,S^n)_{n \in \mathds{N}}$ be a sequence of $\mathcal{X}$.
\begin{enumerate}
\item A state $E^{\textrm{\~{n}}}$ which appears an infinite number of time in this sequence can be found. Let
$$I = \{ (E^n, S^n) \big/ E^n=E^{\textrm{\~{n}}}\}.$$
For all $(E,S) \in I$, $S_0^n \in \llbracket 1, \mathsf{N} \rrbracket$, and $I$ is an infinite set. Then $\textrm{\~{k}} \in \llbracket 1, \mathsf{N} \rrbracket$ can be found such that an infinite number of strategies of $I$ starts with $\textrm{\~{k}}$.

Let $n_0$ be the smallest integer such that $E^n = E^{\textrm{\~{n}}}$ and $S_0^n = \textrm{\~{k}}$.

\item The set
$$I' = \{(E^n,S^n) \big/ E^n = E^{n_0} \textrm{ and } S_0^n = S_0^{n_0}\}$$

is infinite, then one of the elements of $\llbracket 1, \mathsf{N} \rrbracket$ will appear an infinite number of times in the $S_1^n$ of $I'$: let us call it $\textrm{\~{l}}$.

Let $n_1$ be the smallest $n$ such that $(E^n,S^n) \in I'$ and $S_1^n = \textrm{\~{l}}$.

\item The set
$$I'' = \{(E^n,S^n) | E^n = E^{n_0} \textrm{ and } S_0^n = S_0^{n_0} \textrm{ and } S_1^n = S_1^{n_1}\}$$

is infinite, \emph{etc.}
\end{enumerate}

\noindent Let $l = \left(E^{n_0},\left(S_k^{n_k}\right)_{k \in \mathds{N}}\right)$, then the subsequence $\left(E^{n_k},S^{n_k}\right)$ converges to $l$.
\end{proof}

\subsubsection{Topological entropy}

Let $(X, d)$ be a compact metric space and $f: X \rightarrow X$ be a continuous map. For each natural number $n$, a new metric $d_n$ is defined on $X$ by

$$d_n(x,y)=\max\{d(f^i(x),f^i(y)): 0\leq i<n\}.$$

Given any $\varepsilon > 0$ and $n \geqslant 1$, two points of $X$ are $\varepsilon$-close with respect to this metric if their first $n$ iterates are $\varepsilon$-close.

This metric allows one to distinguish in a neighborhood of an orbit the points that move away from each other during the iteration from the points that travel together. A subset $E$ of $X$ is said to be $(n, \varepsilon)$-separated if each pair of distinct points of $E$ is at least $\varepsilon$ apart in the metric $d_n$. Denote by $H(n, \varepsilon)$ the maximum cardinality of an $(n, \varepsilon)$-separated set,

\begin{definition}
The \emph{topological entropy} of the map f is defined by (see e.g.~\cite{Adler65} or~\cite{Bowen})
$$h(f)=\lim_{\epsilon\to 0} \left(\limsup_{n\to \infty} \frac{1}{n}\log H(n,\varepsilon)\right). $$
\end{definition}

We have the result,

\begin{theorem}
Entropy of $(\mathcal{X},G_f)$ is infinite.
\end{theorem}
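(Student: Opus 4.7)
The plan is to verify $h(G_{f_{0}}) = +\infty$ directly from the Bowen--Dinaburg definition: for every prescribed $K > 0$, I would exhibit a scale $\varepsilon = \varepsilon(K) > 0$ and arbitrarily large $n$ such that $H(n, \varepsilon) \geq e^{K n}$, so that $\limsup_{n \to \infty} \tfrac{1}{n} \log H(n, \varepsilon) > K$ and the outer $\varepsilon$-limit therefore diverges.

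The first step is to reduce $(n, \varepsilon)$-separation to a combinatorial condition. Exactly as in the expansivity theorem, a single Boolean cell disagreement in the state component contributes at least $1$ to the distance $d$. Fixing any $\varepsilon \in (0, 1)$ thus guarantees that two configurations are $(n, \varepsilon)$-separated as soon as their state trajectories $\bigl(E^{(0)}, E^{(1)}, \ldots, E^{(n-1)}\bigr)$ under $G_{f_{0}}$ ever disagree. Hence the number of distinct state trajectories of length $n$ is a lower bound on $H(n, \varepsilon)$.

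Next I would perform the counting itself. Starting from a fixed initial state $E \in \mathds{B}^{\mathsf{N}}$, a strategy prefix $\bigl(S^{0}, \ldots, S^{n-1}\bigr) \in \llbracket 1, \mathsf{N} \rrbracket^{n}$ determines the state trajectory completely, and two prefixes whose first disagreement is at position $j \leq n-2$ produce trajectories that agree through iterate $j$ and diverge at iterate $j+1 \leq n-1$ because different cells are flipped. Using the explicit construction from the transitivity proof to realize every prescribed prefix, this already gives at least $\mathsf{N}^{\,n-1}$ pairwise $(n, \varepsilon)$-separated points and therefore $h(G_{f_{0}}, \varepsilon) \geq \log \mathsf{N}$ for every $\varepsilon < 1$.

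The decisive step, and the one where I expect the main technical difficulty to lie, is to push the rate from $\log \mathsf{N}$ to $+\infty$ by letting $\varepsilon \to 0$. For $\varepsilon = 10^{-p}$ the distance $d_{s}$ discriminates between strategies over a window of roughly $n + p$ positions, so the count can be refined by the free choice of entries $S^{n}, \ldots, S^{n+p-1}$ and of the initial state, giving $H(n, \varepsilon) \gtrsim 2^{\mathsf{N}} \mathsf{N}^{\,n+p}$. The obstacle is that a naive $\limsup_{n}$ of $\tfrac{1}{n} \log \mathsf{N}^{\,n+p}$ still only yields $\log \mathsf{N}$, so achieving an unbounded exponential rate requires exploiting the coupling between the shift on strategies and the state-flip dynamics more carefully---for instance, by partitioning each prefix into blocks of length $p$ and constructing $(n, \varepsilon)$-separation witnesses distributed across every block, so that at scale $\varepsilon(p) \to 0$ the usable exponent grows like $p \log \mathsf{N}$ rather than $\log \mathsf{N}$. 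Controlling this coupling so that the refined lower bound survives the $\limsup_{n}$ is the heart of the proof.
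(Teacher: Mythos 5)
Your proposal contains a genuine gap, and to your credit you locate it yourself: everything through the lower bound $h(G_{f_0},\varepsilon)\geqslant \log\mathsf{N}$ is sound, but the ``decisive step'' from $\log\mathsf{N}$ to $+\infty$ is never carried out --- and it cannot be carried out along the lines you sketch. Your own first reduction shows why. For $\varepsilon=10^{-p}$, two points $(S,E)$ and $(\check{S},\check{E})$ with the same initial state whose strategies agree on the first $n+p+1$ terms are \emph{not} $(n,\varepsilon)$-separated: their state iterates $E^{(0)},\dots,E^{(n-1)}$ coincide (these depend only on $E$ and $S^0,\dots,S^{n-2}$), and every shifted strategy distance $d_s(\sigma^i S,\sigma^i\check{S})$ with $i<n$ stays below $10^{-p}$. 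Hence $H(n,10^{-p})\leqslant 2^{\mathsf{N}}\,\mathsf{N}^{\,n+p+1}$, so $\limsup_n \tfrac{1}{n}\log H(n,10^{-p})\leqslant \log\mathsf{N}$ for every fixed $p$, and the $\varepsilon\to 0$ limit is likewise at most $\log\mathsf{N}$. No block decomposition of the prefixes can beat this cap, because the quantity being counted is bounded by the number of $(n,\varepsilon)$-distinguishable initial data, which grows only like $\mathsf{N}^{\,n}$ up to a factor that is absorbed by the $\tfrac{1}{n}$.

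For comparison, the paper reaches $+\infty$ by a much shorter route: any two points with different states are at distance at least $1$, the set of strategies is infinite, therefore (it is claimed) $H(n,1)\geqslant e^{n^2}$. Your reduction step is exactly what shows this inference does not go through: an infinite supply of strategies does not produce an infinite supply of $(n,1)$-separated points, since strategies sharing an initial state and a length-$(n-1)$ prefix generate identical state trajectories and are not $(n,1)$-separated at all; indeed $H(n,1)\leqslant 2^{\mathsf{N}}\mathsf{N}^{\,n-1}$. So the gap in your attempt is real, but the careful half of your argument indicates that the counting bound on which the published proof rests is itself unavailable, and that what is actually provable by these separated-set estimates is $\log\mathsf{N}\leqslant h(G_{f_0})\leqslant\log\mathsf{N}$ rather than $h(G_{f_0})=+\infty$.
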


\begin{proof}
Let $\textrm{E}, \textrm{\v{E}}\in \mathbb{B}^\mathsf{N}$ such that $\exists i_0 \in \llbracket 1, N \rrbracket, \textrm{E}_{i_0} \neq \textrm{\v{E}}_{i_0}$. Then, $\forall \textrm{S}, \textrm{\v{S}} \in \mathcal{S}$,
$$d((\textrm{E},\textrm{S});(\textrm{\v{E}},\textrm{\v{S}})) \geqslant 1$$
But the cardinal $c$ of $\mathcal{S}$ is infinite, then $\forall n \in \mathbb{N}, c >e^{n^2}$.

Then for all $n \in \mathbb{N}$, the maximal number $H(n,1)$ of $(n,1)-$separated points is greater than or equal to $e^{n^2}$, so
$$h_{top}(G_f,1) = \overline{lim} \frac{1}{n} log \left( H(n,1)\right) > \overline{lim} \frac{1}{n} log \left( e^{n^2} \right) = \overline{lim} ~(n) = + \infty.$$

\noindent But $h_{top}(G_f,\varepsilon)$ is an increasing function when  $\varepsilon$ is decreasing, then

$$h_{top} \left( G_f \right) = \lim_{\varepsilon \rightarrow 0} h_{top}(G_f,\varepsilon) > h_{top}(G_f,1) = + \infty,$$
\noindent which concludes the evaluation of the topological entropy of $G_f$.
\end{proof}

We have proven that it is possible to find $f$, such that chaotic
iterations generated by $f$ can be described by a chaotic and entropic map on a
topological space in the sense of Devaney. We have considered a finite set
of states  $\mathds{B}^{\mathsf{N}}$ and a set $\mathbb{S}$ of strategies
composed by an infinite number of infinite sequences. In the following
section we will discuss the impact of these assumptions in the context of the
finite set of machine numbers.

\section{CHAOS IN A FINITE STATE MACHINE}
\label{section:CHAOS IN A FINITE STATE MACHINE}

Let us now explain how it is possible to have a true chaos in a finite state machine.

\subsection{A program with a chaotic behavior}

In the Section \ref{section:CHAOTIC ITERATIONS AS DEVANEY'S CHAOS} we have proven that discrete chaotic iterations can be put in the field of discrete dynamical systems:
$$
\left\{
\begin{array}{l}
x^{0}\in \mathcal{X} \\
x^{n+1} = G_f(x^{n}),
\end{array}%
\right.
$$
where $(\mathcal{X},d)$ is a metric space and $G_f$ is a continuous function. Thus, it becomes possible to study the topological behavior of those chaotic iterations.
Precisely, it has been proven that if the iterate function is based on the vectorial logical negation $f_0$, then chaotic iterations generate chaos according to Devaney. Therefore chaotic iterations, as Devaney's topological chaos, satisfy: sensitive dependence on the initial conditions, unpredictability, indecomposability, and uniform repartition.
Additionally, $G_{f_0}$ has been proven to be expansive and topologically mixing, and its topological entropy has been computed.
Our intention is now to use these chaotic iterations that are highly unpredictable, to build programs in the computer science security field.
Furthermore, we will give in Section \ref{sec:A CHAOTIC NEURAL NETWORK AS HASH FUNCTION} a link between CIs and artificial neural networks, thus it is possible to make them behave chaotically.

\medskip

Up to now, most of computer programs presented as chaotic lose their chaotic properties while computing in the finite set of machine numbers.
The algorithms that have been presented as chaotic usually act as follows.
After having received its initial state, the machine works alone with no interaction with the outside world.
Its outputs only depend on the different states of the machine.
The main problem which prevents speaking about chaos in this particular situation is that when a finite state machine reaches a same internal state twice, the two future evolutions are identical.
Such a machine always finishes by entering into a cycle while iterating.
This highly predictable behavior cannot be set as chaotic, at least as expressed by Devaney.
Some attempts to define a discrete notion of chaos have been proposed, but they are not completely satisfactory and are less recognized than the notions exposed in this paper.

\medskip

The stated problem can be solved in the following way.
The computer must generate an output $O$ computed from its current state $E$ \emph{and} the current value of an input $S$, which changes at each iteration (Fig. \ref{fig:Mealy}).
Therefore, it is possible that the machine presents the same state twice, but with two future evolutions completely different, depending on the values of the input.
By doing so, we thus obtain a machine with a finite number of states, which can evolve in infinitely different ways, due to the new values provided by the input at each iteration.
Thus such a machine can behave chaotically, as defined in the Devaney's formulation.

\begin{figure}[h!]
\centerline{\includegraphics[scale=0.5]{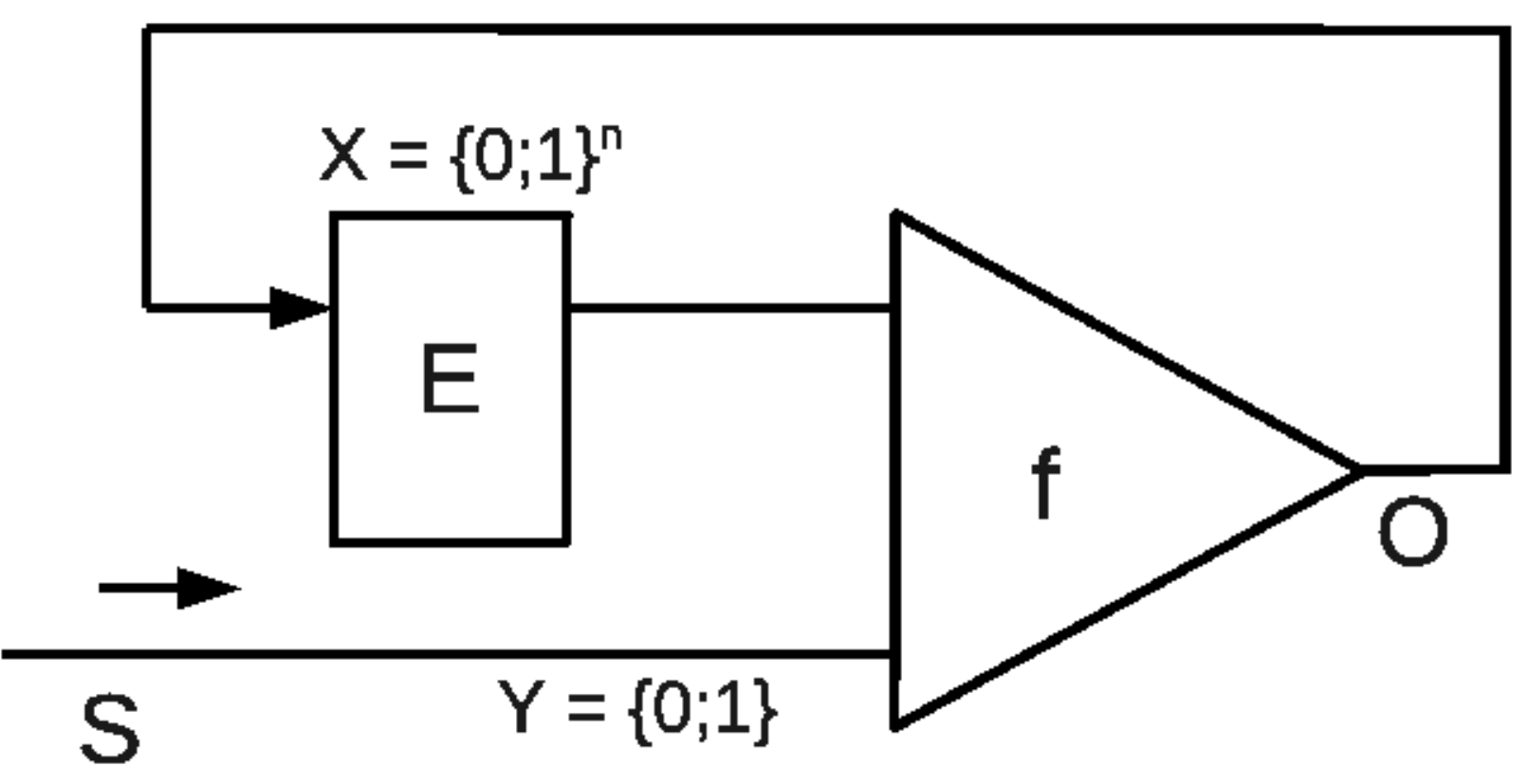}}
\caption{A chaotic finite-state machine. At each iteration, a new value is taken from the outside world (S). It is used by f as input together with the current state (E).}
\label{fig:Mealy}
\end{figure}

\subsection{The practical case of finite strategies}
\label{subsection:The particular case of regularity}

It is worthwhile to notice that even if the set of machine numbers is finite, we deal in practice with the \emph{infinite} set of strategies that have \emph{finite but unbounded lengths}.
Indeed, it is not necessary to store all of the terms of these strategies in the memory.
Only its $n^{th}$ term (an integer less than or equal to $\mathsf{N}$) has to be stored at the $n^{th}$ step, as it is illustrated in the following example.
Let us suppose that a given text is input from the outside world into the computer character by character and that the current term of the strategy is computed from the ASCII code of the current stored character. Since the set of all possible texts of the outside world is infinite and the number of their characters is unbounded, we work with an infinite set of finite but unbounded strategies.

In the computer science framework, we also have to deal with a finite set of states of the form $\mathds{B}^\mathsf{N}$ and as stated before an infinite set $\mathbb{S}$ of strategies. The sole difference with the theoretical study is that instead of being infinite the sequences of $S$ are finite with unbounded length, as any reasonable program must obviously finish one day.

The proofs of continuity and transitivity stated previously are independent of the finiteness of the length of strategies (sequences of $\mathbb{S}$). Sensitivity can be proven too in this situation (see Section~\ref{par:Sensitivity}). So even in the case of finite machine numbers, we have the two fundamental properties of chaos: sensitivity and transitivity, which respectively implies unpredictability and indecomposability (see~\cite{Devaney}, p.50). The regularity supposes that the sequences are of infinite lengths. To obtain the analogous of regularity in the context of finite sets, we can for example define a notion of \emph{periodic but finite} sequences.

\begin{definition}
A strategy $S\in\mathbb{S}$ is said to be \emph{periodic but finite} if $S$ is a finite sequence of length $n$ and if there exists a divisor $p$ of $n$, $p \neq n$, such that $\forall i \leqslant n-p, S^i = S^{i+p}$. A point $(E,S) \in \mathcal{X}$ is said to be \emph{periodic but finite}, if its strategy $S$ is periodic but finite.
\end{definition}

In this situation, $(1,2,1,2,1,2,1,2)$ ($p$=2) and $(2,2,2)$ ($p$=1), are periodic but finite. This definition can be interpreted as the analogous of periodicity definition on finite strategies. Following the proof of regularity (Section \ref{regularite}), it can be proven that the set of periodic but finite points is dense on $\mathcal{X}$, hence obtaining a desired element of regularity in finite sets, as quoted by Devaney (\cite{Devaney}, p.50): ``two points arbitrary close to each other could have completely different behaviors, the one could have a cyclic behavior as long as the system iterates while the trajectory of the second could `visit' the whole phase space''. It should be recalled that the regularity was introduced by Devaney in order to counteract the effects of transitivity: two points close to each other can have fundamentally different behaviors.

\bigskip

In the following we explain how to use chaotic iterations in the computer science security field, by using an illustrative example.
In this sense, we show two different ways to compute chaotic hash functions, the second one using neural networks.

\section{HASH FUNCTIONS WITH TOPOLOGICAL CHAOS PROPERTIES}
\label{section:HASH FUNCTIONS BASED ON TOPOLOGICAL CHAOS}
In this section, a concrete example of a chaotic program is given in the computer science security field.

\subsection{Introduction}

The use of chaotic maps to generate hash algorithms has seen several developments in recent years. In \cite{Fei2005} for example, a digital signature algorithm based on an elliptic curve and chaotic mapping is proposed to strengthen the security of an elliptic curve digital signature algorithm. Other examples of the generation of a hash function using chaotic maps can be found in, \emph{e.g.}, \cite{Wang2003,Xiao20094346,Peng2005}.
Neural networks that have learned a continuous chaotic map have been proposed too in recent years \cite{Xiao10}, to achieve hash functions requirements.

Note that using any chaotic map does not guarantee that the resulting hash function would behave chaotically too.
To the best of our knowledge, this point is not discussed in these referenced papers, however it should be considered as important.
We define in this section a new way to construct hash functions based on chaotic iterations.
As a consequence of the theory presented before, the generated hash functions satisfy various topological chaos properties.
Thus, properties required for hash functions are guaranteed by our approach.
For example, the avalanche criterion is deduced from the expansivity property.

\subsection{A chaotic hash function}

In this section, we explain a new way to obtain a digest of a digital medium described by a binary sequence.
It is based on chaotic iterations and satisfies various topological chaos properties.
The hash value will be the last state of some chaotic iterations: the initial state $X_0$, finite strategy $S$, and iterate function must then be defined.
\label{subsubsec:initial}

The initial condition $X_0=\left( S,E\right) $ is composed by a $\mathsf{N} = 256$ bits sequence $E$ and a chaotic strategy $S$.
In the following section, we describe in detail how to obtain this initial condition from the original medium.

\subsubsection{How to obtain $E$}

The first step of our algorithm is to transform the message in a normalized 256 bits sequence $E$.
To illustrate this step inspired by SHA-1, we state that our original text is: ``\emph{The original text}''.
Each character of this string is replaced by its ASCII code (on 7 bits). Then, we add a 1 at the end of this string.
\bigskip
\begin{center}
\begin{alltt}
\noindent 10101001 10100011 00101010 00001101 11111100 10110100
\noindent 11100111 11010011 10111011 00001110 11000100 00011101
\noindent 00110010 11111000 11101001
\end{alltt}
\end{center}
\bigskip
So, the binary value (1111000) of the length of this string (120) is added, with another 1:
\bigskip
\begin{center}
\begin{alltt}
\noindent 10101001 10100011 00101010 00001101 11111100 10110100
\noindent 11100111 11010011 10111011 00001110 11000100 00011101
\noindent 00110010 11111000 11101001 11110001
\end{alltt}
\end{center}
\bigskip
This string is inverted (the last bit is now the first one) and the two new substrings are concatenated.
This gives:
\bigskip
\begin{center}
\begin{alltt}
\noindent 10101001 10100011 00101010 00001101 11111100 10110100
\noindent 11100111 11010011 10111011 00001110 11000100 00011101
\noindent 00110010 11111000 11101001 11110001 00011111 00101110
\noindent 00111110 10011001 01110000 01000110 11100001 10111011
\noindent 10010111 11001110 01011010 01111111 01100000 10101001
\noindent 10001011 0010101
\end{alltt}
\end{center}
\medskip So, we obtain a multiple of 512, by duplicating this string enough and truncating at the next multiple of 512. This string in which the whole original text is contained, is denoted by $D$.
\bigskip

Finally, we split the new string into blocks of 256 bits and apply the exclusive-or function, obtaining a 256 bits sequence in a manner inspired by the SHA-X algorithms.
\bigskip
\begin{alltt}
\noindent 11111010 11100101 01111110 00010110 00000101 11011101
\noindent 00101000 01110100 11001101 00010011 01001100 00100111
\noindent 01010111 00001001 00111010 00010011 00100001 01110010
\noindent 01000011 10101011 10010000 11001011 00100010 11001100
\noindent 10111000 01010010 11101110 10000001 10100001 11111010
\noindent 10011101 01111101
\end{alltt}

In the context of Subsection \ref{subsubsec:initial}, $\mathsf{N}=256$, and $E$ is the above obtained sequence of 256 bits: the given message has been compressed into a 256 binary string.

\medskip

We now have the definitive length of our digest.
Note that a lot of texts have the same normalized string.
This is not a problem because the strategy we will build depends on the whole text too, in such a way that two different texts lead to two different strategies.
Let us now build the strategy $S$.

\subsubsection{How to choose $S$}

To obtain the strategy $S$, an intermediate sequence $(u^n)$ is constructed from $D$ as follows:
\begin{itemize}
\item $D$ is split into blocks of 8 bits. Then $u^n$ is the decimal value of the $n^{th}$ block.
\item A circular rotation of one bit to the left is applied to $D$ (the first bit of $D$ is put on the end of $D$). Then the new string is split into blocks of 8 bits another time. The decimal values of those blocks are added to $(u^n)$.
\item This operation is repeated again 6 times.
\end{itemize}
\bigskip

It is now possible to build the strategy $S$:
\begin{equation*}
S^0 = u^0,~~~
S^n=(u^n+2\times S^{n-1}+n) ~(mod ~256).
\end{equation*}%
\noindent $S$ will be highly dependent to the changes of the original text, because $\theta \longmapsto 2\theta ~(mod ~1)$ is known to be chaotic as defined by Devaney's theory \cite{Devaney}.

\subsubsection{How to construct the digest}

To construct the digest, chaotic iterations are done with initial state $X^0$,
\begin{equation*}
\begin{array}{rccc}
f: & \llbracket1,256\rrbracket & \longrightarrow & \llbracket1,256\rrbracket
\\
& (E_1,\hdots,E_{256}) & \longmapsto & (\overline{E_1},\hdots,\overline{%
E_{256}}),%
\end{array}%
\end{equation*}%
\noindent as iterate function, and $S$ for the chaotic strategy.
\bigskip

\noindent The result of those iterations is a 256 bits vector. Its components are taken 4 bits at a time and translated into hexadecimal numbers, to obtain the hash value:
\medskip
\begin{small}
\begin{alltt}
\noindent 63A88CB6AF0B18E3BE828F9BDA4596A6A13DFE38440AB9557DA1C0C6B1EDBDBD
\end{alltt}
\end{small}

\bigskip

To compare, if instead of using the text \textquotedblleft \textit{The original text}\textquotedblright\ we took \textquotedblleft \textit{the original text}\textquotedblright , the hash function returns:
\medskip
\begin{small}
\begin{alltt}
\noindent 33E0DFB5BB1D88C924D2AF80B14FF5A7B1A3DEF9D0E831194BD814C8A3B948B3
\end{alltt}
\end{small}
\bigskip

In this paper, the generation of hash value is done with the vectorial Boolean negation $f_{0} $ defined in eq. (\ref{f0}). Nevertheless, the procedure remains general and can be applied with any function $f$ such that $G_f$ is chaotic. In the following subsection, a complete example of the procedure is given.

\subsection{Application example}

Consider two black and white images of size $64 \times 64$ in Fig. \ref{Hash of some black and white images}, in which the pixel in position (40,40) has been changed.
\begin{figure}[h]
\centering
\subfigure[Original image.]{\includegraphics[width=0.21\textwidth]{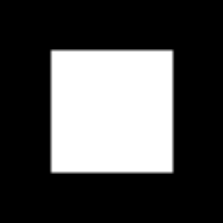}\label{Original image}}
\subfigure[Modified image.]{\includegraphics[width=0.21\textwidth]{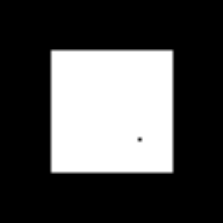}\label{Modified image}}
\caption{Hash of some black and white images.}
\label{Hash of some black and white images}
\end{figure}
\medskip
In this case, the hash function returns:
\begin{small}
\begin{alltt}
\noindent 34A5C1B3DFFCC8902F7B248C3ABEFE2C9C9538E5104D117B399C999F74CF1CAD
\end{alltt}
\end{small}
for the Fig. \ref{Original image} and
\begin{small}
\begin{alltt}
\noindent 5E67725CAA6B7B7434BE57F5F30F2D3D57056FA960B69052453CBC62D9267896
\end{alltt}
\end{small}
for the Fig. \ref{Modified image}.
\bigskip

Consider two 256 graylevel images of Lena ($256 \times 256$ pixels) in figure \ref{Hash of some grayscale level images}, in which the grayscale level of the pixel in position (50,50) has been transformed from 93 (fig. \ref{Original Lena}) to 94 (fig. \ref{Modified Lena}).
\begin{figure}[h]
\centering
\subfigure[Original lena.]{\includegraphics[width=0.38\textwidth]{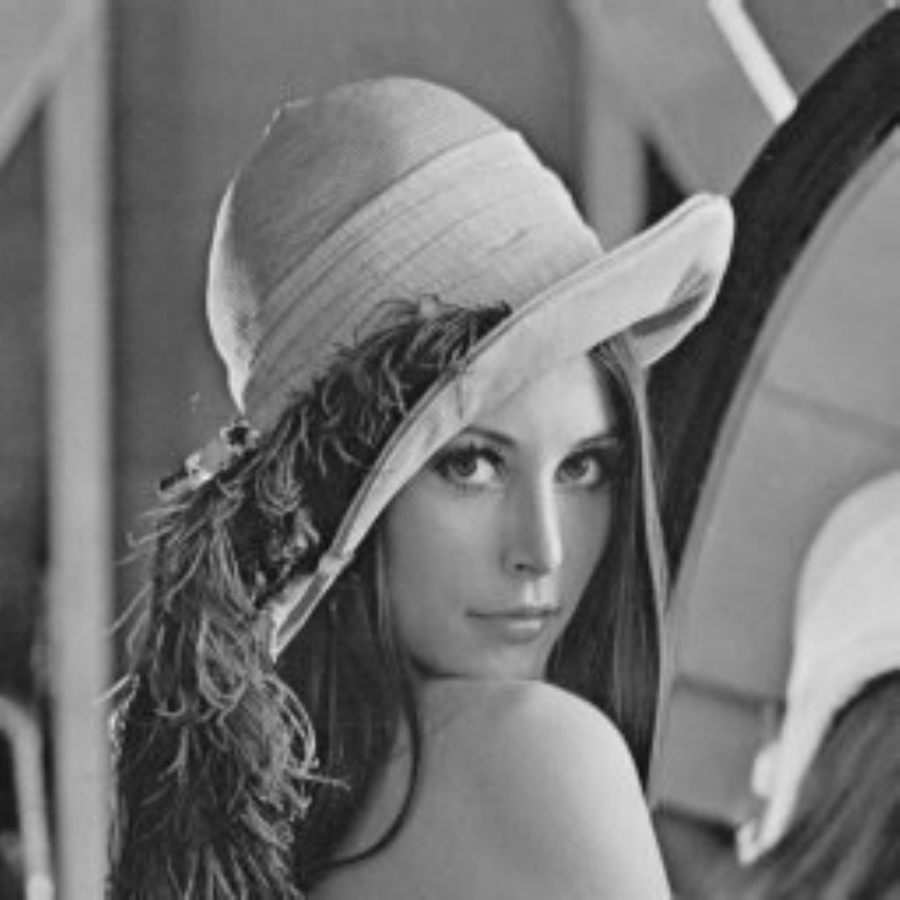}\label{Original Lena}}\hspace{1cm}
\subfigure[Modified lena.]{\includegraphics[width=0.38\textwidth]{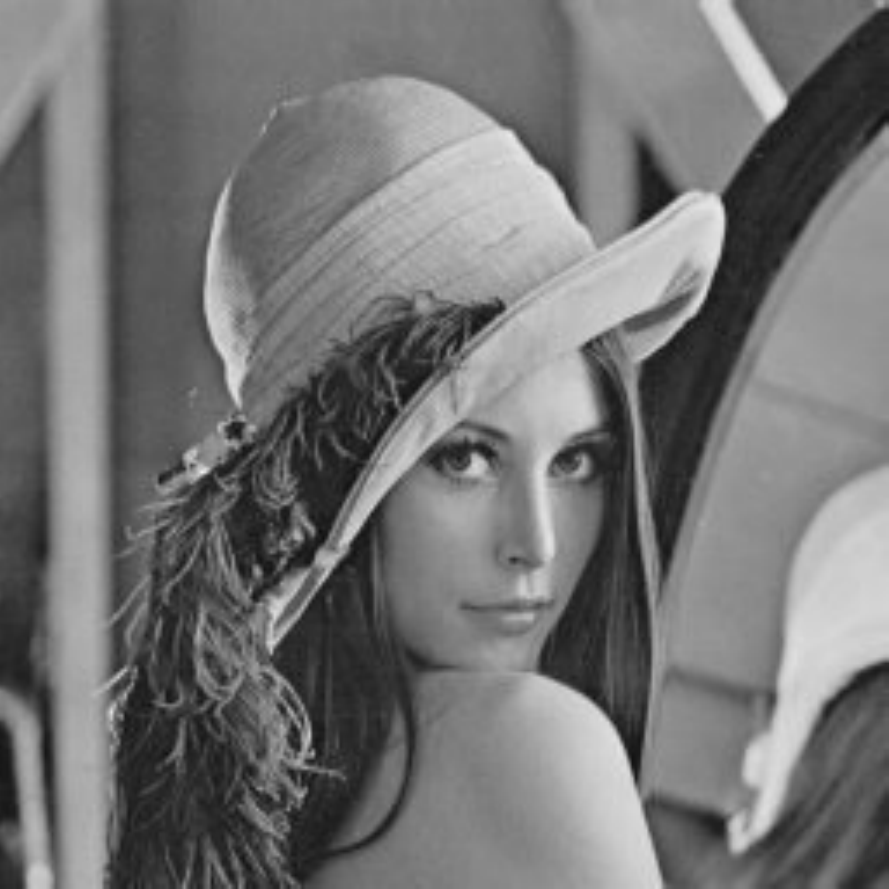}\label{Modified Lena}}
\caption{Hash of some grayscale level images.}
\label{Hash of some grayscale level images}
\end{figure}
In this case, the hash function returns:
\begin{small}
\begin{alltt}
\noindent FA9F51EFA97808CE6BFF5F9F662DCD738C25101FE9F7F427CD4E2B8D40331B89
\end{alltt}
\end{small}
for the left Lena and
\begin{small}
\begin{alltt}
\noindent BABF2CE1455CA28F7BA20F52DFBD24B76042DC572FCCA4351D264ACF4C2E108B
\end{alltt}
\end{small}
for the right Lena.
\medskip

These examples give an illustration of the avalanche effect obtained by this algorithm. A more complete study of the properties possessed by the hash functions and resistance under collisions will be studied in a future work.

\section{A CHAOTIC NEURAL NETWORK AS HASH FUNCTION}
\label{sec:A CHAOTIC NEURAL NETWORK AS HASH FUNCTION}

A hash function can be achieved in two stages: the compression of the message (mapping a binary sequence of any length $n \in \mathds{N}$ into a message of a fixed length belonging into $\mathds{B}^\mathsf{N}$, for a given fixed length $\mathsf{N} \in \mathds{N}$) and the hash of the compressed message \cite{Xiao10}.
As several compression functions have yet been proposed to achieve the first stage, we will only focus on the second stage and we will explain how to build a neural network that realize it.
This neural network that hashes compressed messages will behave chaotically, as it is defined by the Devaney's theory.

\medskip

Let us firstly explain how it is possible to build a neural network that behaves chaotically.
Consider $f:\mathds{B}^\mathsf{N} \longrightarrow \mathds{N}^\mathsf{N}$ and a MLP which recognize $F_{f}$.
That means, for all $(k,x) \in \llbracket 1 ; \mathsf{N} \rrbracket \times
\mathds{B}^\mathsf{N}$, the response of the output layer to the input
$(k,x)$ is $F_{f}(k,x)$.
We thus connect the output layer to the input one as it is depicted in
Figure~\ref{perceptron}, leading to a global recurrent artificial neural network (ANN) working as follows \cite{arxivRNNchaos}:

\begin{figure}[!t]
\centering
\includegraphics[width=3.5in]{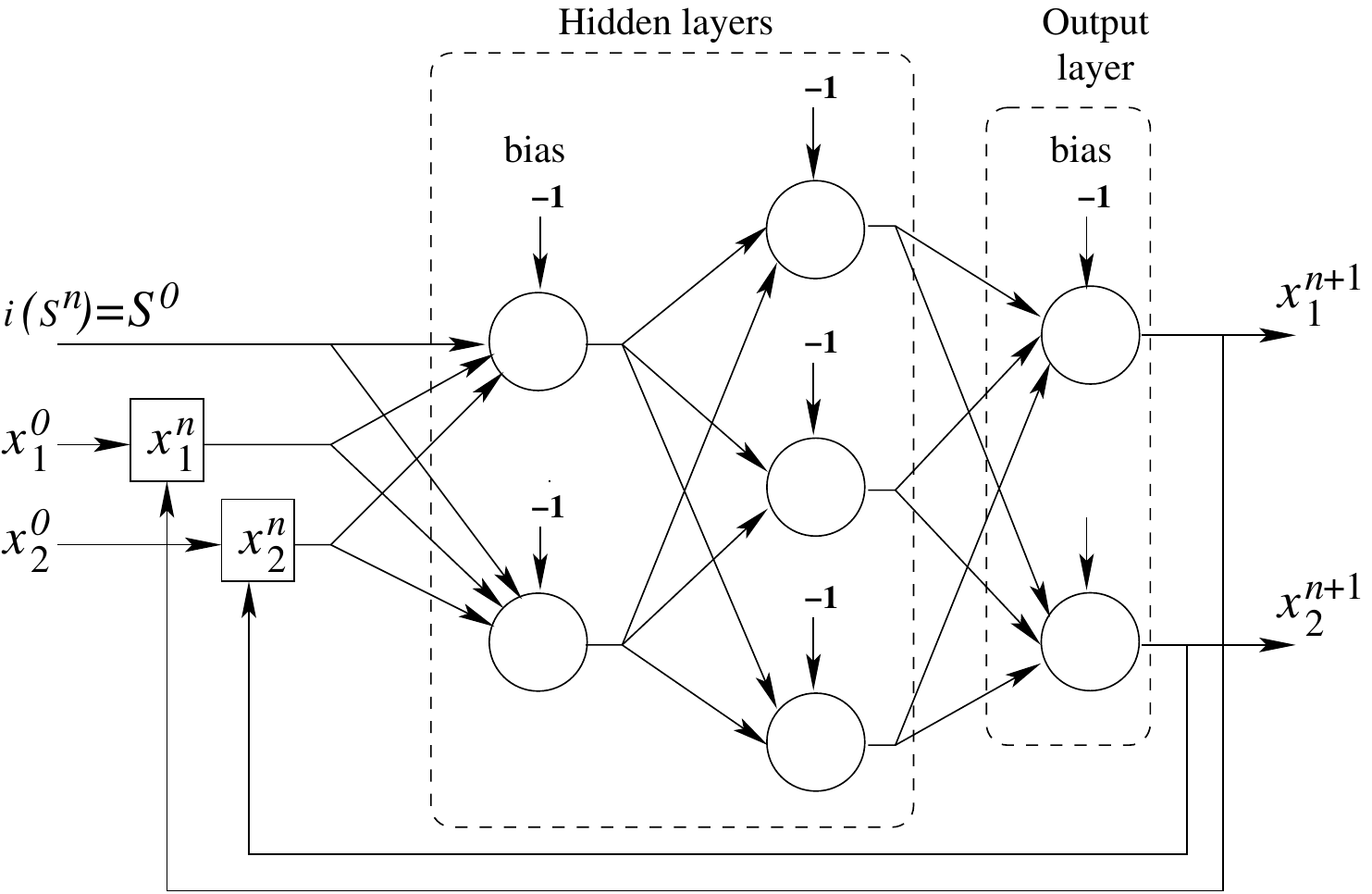}
\caption{Example of global recurrent neural network modeling function $F_{f}$ such that
 $x^{n+1}=\left(x^{n+1}_1,x^{n+1}_2\right)=F_{f}\left(i(S^n),\left(x^n_1,x^n_2\right)\right)$}
\label{perceptron}
\hfil
\end{figure}

\begin{itemize}
\item At the initialization stage, the ANN receives a Boolean vector
$x^0\in\mathds{B}^\mathsf{N}$ as input state, and $S^0 \in \llbracket
1;\mathsf{N}\rrbracket$ in its input integer channel $i()$. Thus,
$x^1 = F_{f}(S^0, x^0)\in\mathds{B}^\mathsf{N}$ is computed by the
neural network.
\item This state $x^1$ is published as an output. Additionally, $x^1$
is sent back to the input layer, to act as Boolean state in the next
iteration.
\item At iteration number $n$, the recurrent neural network receives
the state $x^n\in\mathds{B}^\mathsf{N}$ from its output layer and
$i\left(S^n\right) \in \llbracket 1;\mathsf{N}\rrbracket$ from its
input integer channel $i()$. It can thus calculate $x^{n+1} =
F_{f}(i\left(S^n\right), x^n)\in\mathds{B}^\mathsf{N}$, which will
be the new output of the network.
\end{itemize}

Obviously, this particular MLP produce exactly the same values than CIs with update function $f$. That is, such MLPs are equivalent to CIs with $f$ as update function.
However, the compression stage of the hash function presented in the previous section can be resumed to making chaotic iterations over the compressed message.
As chaotic iterations can be obtained with a neural network, we can thus realize this stage with a (chaotic) neural network.
Finally, it is important to remark that the proposed hash function can be implemented into a global neural network, as various compression neural networks can be found in the literature \cite{Mahoney00,Rudenko08,Cramer96}: we just have to replace our compression stage, inspired by SHA-X, with one of these compression ANN.

\section{CONCLUSION}

In this paper, a new approach to compute programs with a chaotic behavior is proposed.
This approach is based on the well-known Devaney's topological chaos.
The algorithms which are of iterative nature are based on the so-called chaotic iterations.
This is achieved by establishing a link between the notions of topological chaos and chaotic iterations.
Indeed, we are not interested in stable states of such iterations as it has always been the case in the literature, but in their unpredictable behavior.
After a solid theoretical study, we consider the practical implementation of the proposed algorithms by evaluating the case of finite sets. We study the behavior of the induced computer programs proving that it is possible to design true chaotic computer programs.

An application is proposed in the area of information security: a new hash function is presented, the security in this case is guaranteed by the unpredictability of the behavior of the proposed algorithms.
 The algorithms derived from our approach satisfy important properties of topological chaos such as sensitivity to initial conditions, uniform repartition (as a result of the transitivity), unpredictability, expansivity, and topological mixing.
 Moreover, its topological entropy is infinite.
The results expected in our study have been experimentally checked. The choices made in this first study are simple: compression function inspired by SHA-1, negation function for the iteration function, \emph{etc.}
The aim was not to find the best hash function, but to give simple illustrated examples to prove the feasibility in using the new kind of chaotic algorithms in computer science.
Finally, we have shown how the mathematical framework of topological chaos offers interesting qualitative and qualitative tools to study the algorithms based on our approach.

In future work, we will investigate other choices of iteration functions and chaotic strategies. We will try to characterize transitive functions.  Other properties induced by topological chaos will be explored and their interest in the information security framework will be deepened.

\bibliographystyle{plain}
\bibliography{mabase}

\begin{thebibliography}{10}

\bibitem{Adler65}
R.~L. Adler, A.~G. Konheim, and M.~H. McAndrew.
\newblock Topological entropy.
\newblock {\em Trans. Amer. Math. Soc.}, 114:309--319, 1965.

\bibitem{guyeux10ter}
Jacques Bahi and Christophe Guyeux.
\newblock A new chaos-based watermarking algorithm.
\newblock In {\em SECRYPT'10, Int. conf. on security and cryptography}, pages
  455--458, Athens, Greece, July 2010. SciTePress.

\bibitem{bgw09:ip}
Jacques Bahi, Christophe Guyeux, and Qianxue Wang.
\newblock A novel pseudo-random generator based on discrete chaotic iterations.
\newblock In {\em INTERNET'09, 1-st Int. Conf. on Evolving Internet}, pages
  71--76, Cannes, France, August 2009.

\bibitem{guyeux09}
Jacques~M. Bahi and Christophe Guyeux.
\newblock Hash functions using chaotic iterations.
\newblock {\em Journal of Algorithms \& Computational Technology},
  4(2):167--181, 2010.

\bibitem{guyeux10}
Jacques~M. Bahi and Christophe Guyeux.
\newblock Topological chaos and chaotic iterations, application to hash
  functions.
\newblock In {\em WCCI'10, IEEE World Congress on Computational Intelligence},
  pages 1--7, Barcelona, Spain, July 2010.
\newblock Best paper award.

\bibitem{arxivRNNchaos}
Jacques~M. Bahi, Christophe Guyeux, and Michel Salomon.
\newblock Building a chaotic proved neural network.
\newblock {\em CoRR}, abs/1101.4351, 2011.

\bibitem{Banks92}
J.~Banks, J.~Brooks, G.~Cairns, and P.~Stacey.
\newblock On devaney's definition of chaos.
\newblock {\em Amer. Math. Monthly}, 99:332--334, 1992.

\bibitem{Bowen}
R.~Bowen.
\newblock Entropy for group endomorphisms and homogeneous spaces.
\newblock {\em Trans. Amer. Math. Soc.}, 153:401--414, 1971.

\bibitem{Chazan69}
D.~Chazan and W.~Miranker.
\newblock Chaotic relaxation.
\newblock {\em Linear algebra and its applications}, pages 199--222, 1969.

\bibitem{CongJQZ06}
Jin Cong, Yan Jiang, Zhiguo Qu, and Zhongmei Zhang.
\newblock A wavelet packets watermarking algorithm based on chaos encryption.
\newblock In Marina~L. Gavrilova, Osvaldo Gervasi, Vipin Kumar, Chih
  Jeng~Kenneth Tan, David Taniar, Antonio Lagan{\`a}, Youngsong Mun, and
  Hyunseung Choo, editors, {\em ICCSA (1)}, volume 3980 of {\em Lecture Notes
  in Computer Science}, pages 921--928. Springer, 2006.

\bibitem{Zhu06}
Zhu Congxu, Liao Xuefeng, and Li~Zhihua.
\newblock Chaos-based multipurpose image watermarking algorithm.
\newblock {\em Wuhan University Journal of Natural Sciences}, 11:1675--1678,
  2006.
\newblock 10.1007/BF02831848.

\bibitem{Cramer96}
Christopher Cramer, Erol Gelenbe, and Hakan Bakircioglu.
\newblock Video compression with random neural networks.
\newblock {\em Neural Networks for Identification, Control, and Robotics,
  International Workshop}, 0:0476, 1996.

\bibitem{Devaney}
Robert~L. Devaney.
\newblock {\em An Introduction to Chaotic Dynamical Systems}.
\newblock Addison-Wesley, Redwood City, CA, 2nd edition, 1989.

\bibitem{Fei2005}
Peng Fei, Qiu Shui-Sheng, and Long Min.
\newblock A secure digital signature algorithm based on elliptic curve and
  chaotic mappings.
\newblock {\em Circuits Systems Signal Processing}, 24, No. 5:585--597, 2005.

\bibitem{GuyeuxThese10}
Christophe Guyeux.
\newblock {\em Le d\'{e}sordre des it\'{e}rations chaotiques et leur
  utilit\'{e} en s\'{e}curit\'{e} informatique}.
\newblock PhD thesis, Universit\'{e} de Franche-Comt\'{e}, 2010.

\bibitem{gfb10:ip}
Christophe Guyeux, Nicolas Friot, and Jacques Bahi.
\newblock Chaotic iterations versus spread-spectrum: chaos and stego security.
\newblock In {\em IIH-MSP'10, 6-th Int. Conf. on Intelligent Information Hiding
  and Multimedia Signal Processing}, pages 208--211, Darmstadt, Germany,
  October 2010.

\bibitem{Xiao10}
Yantao Li, Shaojiang Deng, and Di~Xiao.
\newblock A novel hash algorithm construction based on chaotic neural network.
\newblock {\em Neural Computing and Applications}, pages 1--9, 2010.

\bibitem{Mahoney00}
Matthew~V. Mahoney.
\newblock Fast text compression with neural networks.
\newblock In {\em Proceedings of the Thirteenth International Florida
  Artificial Intelligence Research Society Conference}, pages 230--234. AAAI
  Press, 2000.

\bibitem{Peng2005}
F.~Peng, S.-S. Qiu, and M.~Long.
\newblock One way hash function construction based on two-dimensional
  hyperchaotic mappings.
\newblock {\em Acta Phys. Sinici.}, 54:98--104, 2005.

\bibitem{Robert1986}
F.~Robert.
\newblock {\em Discrete Iterations: A Metric Study}, volume~6 of {\em Springer
  Series in Computational Mathematics}.
\newblock 1986.

\bibitem{Rudenko08}
O.~Rudenko and M.~Snytkin.
\newblock Image compression based on the neural network art.
\newblock {\em Cybernetics and Systems Analysis}, 44:797--802, 2008.

\bibitem{Zhou1997429}
Chang song Zhou and Tian lun Chen.
\newblock Extracting information masked by chaos and contaminated with noise:
  Some considerations on the security of communication approaches using chaos.
\newblock {\em Physics Letters A}, 234(6):429 -- 435, 1997.

\bibitem{wbg10:ip}
Qianxue Wang, Jacques Bahi, Christophe Guyeux, and Xiaole Fang.
\newblock Randomness quality of {CI} chaotic generators. application to
  internet security.
\newblock In {\em INTERNET'2010. The 2nd Int. Conf. on Evolving Internet},
  pages 125--130, Valencia, Spain, September 2010. IEEE Computer Society Press.
\newblock Best Paper award.

\bibitem{Wang2003}
X.~M. Wang, J.~S. Zhang, and W.~F. Zhang.
\newblock One-way hash function construction based on the extended chaotic maps
  switch.
\newblock {\em Acta Phys. Sinici.}, 52, No. 11:2737--2742, 2003.

\bibitem{Wu2007bis}
Xianyong Wu and Zhi-Hong Guan.
\newblock A novel digital watermark algorithm based on chaotic maps.
\newblock {\em Physics Letters A}, 365(5-6):403 -- 406, 2007.

\bibitem{Xiao20094346}
Di~Xiao, Xiaofeng Liao, and Yong Wang.
\newblock Improving the security of a parallel keyed hash function based on
  chaotic maps.
\newblock {\em Physics Letters A}, 373(47):4346 -- 4353, 2009.

\end{thebibliography}
\end{document}